\documentclass[journal]{IEEEtran}

\usepackage{amsfonts}
\usepackage{amsmath,amscd,multirow}
\usepackage{amssymb}
\usepackage{graphicx}%
\usepackage{color}
\usepackage{tikz}	
\usetikzlibrary{backgrounds,fit,decorations.pathreplacing}  

\newtheorem{theorem}{Theorem}

\newenvironment{proof}[1][Proof]{\noindent\textbf{#1.} }{\hfill \rule{0.6em}{0.6em}\\}

 \DeclareMathOperator{\tr}{Tr}

\newcommand{\ket}[1]{| #1 \rangle}
\newcommand{\proj}[1]{| #1 \rangle\!\langle #1 |}

\newcommand{\be}{{\mathbf e}}

\def\cA{{\cal A}}        

\def\cD{{\cal D}}        \def\cE{{\cal E}}

        \def\cK{{\cal K}}
\def\cL{{\cal L}}

\def\cM{{\cal M}}        \def\cN{{\cal N}}

\def\cU{{\cal U}}
        
\def\cX{{\cal X}}
\def\cY{{\cal Y}}        \def\cZ{{\cal Z}}

\def\0{{\mathbf{0}}}
\def\1{{\mathbf{1}}}
\def\2{{\mathbf{2}}}
\def\3{{\mathbf{3}}}
\def\4{{\mathbf{4}}}
\def\5{{\mathbf{5}}}
\def\6{{\mathbf{6}}}

\def\7{{\mathbf{7}}}
\def\8{{\mathbf{8}}}
\def\9{{\mathbf{9}}}

\def\fD{\mathfrak{D}}

\def\fS{\mathfrak{S}}

\def\fg{\mathfrak{g}}


\def\be{\begin{equation}}
\def\ee{\end{equation}}
\def\bea{\begin{eqnarray}}
\def\eea{\end{eqnarray}}


\DeclareMathOperator{\Tr}{Tr}


\begin{document}

\title{Channel Simulation and Coded Source Compression}

\author{Min-Hsiu Hsieh and Shun Watanabe\thanks{
This work was partially in Proceedings of IEEE Intl. Symp. Inf. Theory 2015, and in Proceedings of IEEE Information Theory Workshop 2015.

Min-Hsiu Hsieh is with the Centre for Quantum Computation \& Intelligent Systems, Faculty of Engineering and Information Technology,
University of Technology Sydney, Australia. Min-Hsiu Hsieh acknowledges the hospitality of the UTS-AMSS Joint Research Laboratory for Quantum Computation and Quantum Information Processing, Academy of Mathematics and Systems Science, Chinese Academy of Sciences, Beijing 100190, China.
(email: Min-Hsiu.Hsieh@uts.edu.au)

Shun Watanabe is with the Department of Computer and Information Sciences, Tokyo University of Agriculture and Technology, Japan.
(email: shunwata@cc.tuat.ac.jp).
}
}

\maketitle

\begin{abstract}
Coded source compression, also known as source compression with helpers, has been a major variant of distributed source compression, but has hitherto received little attention in the quantum regime. This work treats and solves the corresponding quantum coded source compression through an observation that connects coded source compression with channel simulation. First, we consider classical source coding with quantum side information where the quantum side information is observed by a helper and sent to the decoder via a classical channel. We derive a single-letter characterization of the achievable rate region for this problem. The direct coding theorem of our result is proved via the measurement compression theory of Winter, a quantum-to-classical channel simulation. Our result reveals that a helper's scheme which separately conducts a measurement and a compression is suboptimal, and measurement compression seems necessary to achieve the optimal rate region.  We then study coded source compression in the fully quantum regime, where two different scenarios are considered depending on the types of communication channels between the legitimate source and the receiver. We further allow entanglement assistance from the quantum helper in both scenarios. We characterize the involved quantum resources, and derive single-letter expressions of the achievable rate region. The direct coding proofs are based on well-known quantum protocols, the quantum state merging protocol and the fully quantum Slepian-Wolf protocol, together with the quantum reverse Shannon theorem. 

\end{abstract}


\section{Introduction}

Source coding normally refers to the information processing task that aims to reduce the redundancy exhibited when multiple copies of the same source are used.  In establishing information theory, Shannon demonstrated a fundamental result that source coding can be done in a \emph{lossless} fashion;  namely, the recovered source will be an exact replica of the original one when the number of copies of the source goes to infinity \cite{Shannon:1948wk}. If representing the source by a random variable $X$, with output space $\cX$ and distribution $p_X$, lossless source coding is possible if and only if the compression rate $R$ is above its Shannon entropy:
\begin{equation}
R\geq H(X),
\end{equation}
where $H(X):=\sum_{x\in\cX} -p_X(x) \log p_X(x)$.

Redundancy can also exist in scenarios where multiple copies of the source are shared by two or more parties that are far apart.  Compression in this particular setting is called \emph{distributed} source coding, which has been proven to be important in the internet era.  The goal is to minimise the information sent by each party so that the decoder can still recover the source faithfully.  Shannon's lossless source coding theorem can still be applied individually to each party. However, it has been discovered that a better source coding strategy exists if the sources between different parties are correlated. Denote by $X$ and $Y$ the sources held by the two distant parties, where the joint distribution is $P_{XY}$ and the output spaces are $\cX$ and $\cY$, respectively. 
Slepian and Wolf showed that lossless distributed source coding is possible when the compression rates $R_1$ and $R_2$ for the two parties satisfy \cite{Slepian:1973wj}:
\begin{align}
R_1 &\geq  H(X|Y),\\
R_2  &\geq H(Y|X),\\
R_1+R_2  &\geq H(XY),
\end{align}
where $H(X|Y)$ is the conditional Shannon entropy.  This theorem is now called the classical Slepian-Wolf theorem \cite{Slepian:1973wj}. In particular, when the source $Y$ is directly observed at the decoder, the problem is sometimes called source coding with (full) side information. 

Another commonly encountered scenario in a communication network is that a centralised server exists and its role is to coordinate all the information processing tasks, including the task of source coding, between the nodes in this network. Obviously, the role of the server is simply as a helper and it is not critical to reproduce the exact information communicated by the server. This is a major variant of distributed source compression, and proves to be important in the internet era. This slightly different scenario results in a completely different characterisation of the rate region, as observed by Wyner \cite{Wyner:1975iv} and Ahlswede-K\"orner \cite{Ahlswede:1975ea}. Consider that the receiver wants to recover the source $X$ with the assistance of the server (that we will call a helper from now on) holding $Y$, where the distribution is $P_{XY}$. Let $R_1$ and $R_2$ be the compression rates for the sender and the helper, respectively. Wyner and  Ahlswede-K\"orner showed that the optimal rate region for lossless source coding of $X$ with a classical helper $Y$ is the set of rate pairs $(R_1,R_2)$ such that 
\begin{align}
R_1 & \geq  H(X|U), \label{eq:classical-region-R1} \\
R_2 & \geq  I(U;Y), \label{eq:classical-region-R2} 
\end{align}
for some conditional distribution $p_{U|Y}(u|y)$, and $I(U;Y)$ is the classical mutual information between random variables $U$ and $Y$. When there is no constraint  on $R_2$ (i.e. $R_2$ can be as large as it can be), this problem reduces to source coding with (full) side information.

All of the above discussions are special cases of multi-terminal noiseless source coding problems \cite{Korner:1977dc, Csiszar:1980ba, TeSunHan:1980ci},  illustrated in Figure~\ref{fig:GMNSC}.  Consider $s$ dependent sources $X_1,X_2,\cdots,X_s$ and let $\fS=\{1,2,\cdots, s\}$. These $s$ sources are spatially separated, each observed by its own encoder $\cE_i$. There are $r$ decoders $\cD_1,\cdots,\cD_r$, and we denote $\fD=\{1,2,\cdots, r\}$.
Each decoder $\cD_k$ receives inputs from a pre-determined set $\fS_i\subset \fS$ of encoders and is required to output the $\fg(j)$-th source information $X_{\fg(j)}$ with vanishing probability of error as $n$ goes to infinity. The function $\fg: \fD \to \fS$, and we further assume that $\fg(j) \in\fS_j$, $\forall j\in\fD$. The target source $X_{\fg(j)}$ is called the \emph{primary} source for the $j$-th decoder $\cD_j$, while all other $\{X_p\}_{p\in\fS_j\backslash\fg(j)}$ are called the \emph{side} information for $\cD_j$. The achievable rate region $(R_1,R_2,\cdots, R_s)$ for this general multi-terminal noiseless source coding problem is derived in \cite{Csiszar:1980ba, TeSunHan:1980ci}. It is interesting to see that the Slepian-Wolf case corresponds to $\fS=\fD=\{1,2\}$, $\fS_j=\fS$, $\fg(j)=j$. The Wyner and Ahlswede-K\"orner case \cite{Wyner:1975iv, Ahlswede:1975ea} corresponds to $\fS=\{1,2\}$, $\fD=\{1\}$, $\fS_1=\{1,2\}$, $\fg(1)=1$. It also reduces to other instances of distributed source coding problems that fall beyond the main interests of this paper \cite{Wyner:1975iv, Sgarro:1977it, Korner:1977dc}.

\begin{figure}
\centerline{
    \begin{tikzpicture}[scale=1][very thick]
    \fontsize{10pt}{1} 
    \tikzstyle{halfnode} = [draw,fill=white,shape= underline,minimum size=1.0em]
    \tikzstyle{checknode} = [draw,fill=blue!10,shape= rectangle,minimum height=2.3em, minimum width=2.3em]
    \tikzstyle{checknode2} = [draw,fill=blue!10,shape= rectangle,minimum height=8em, minimum width=2em]
    \tikzstyle{variablenode} = [draw,fill=white, shape=circle,minimum size=0.05em]
    \node[checknode] (cn1) at (0,3.75) {${\cal E}_{1}$};
    \node[checknode] (cn2) at (0,2.75) {${\cal E}_{2}$};  
    \node[checknode] (cn3) at (0,1.75) {${\cal E}_{3}$};  
    \node[checknode] (cn4) at (0,-1.75) {${\cal E}_{s-1}$};  
    \node[checknode] (cn5) at (0,-2.75) {${\cal E}_{s}$};  
    \node (x0) at (0,0) {$\vdots$};
    \node (x1) at (-1.5,3.75) {$X_1^n$} ;
    \node (x2) at (-1.5,2.75) {$X_2^n$} ;
    \node (x3) at (-1.5,1.75) {$X_3^n$} ;
    \node (x4) at (-1.5,-1.75) {$X_{s-1}^n$} ;
    \node (x5) at (-1.5,-2.75) {$X_s^n$} ;
    \node[variablenode] (en1) at (3,3.75) {};
    \node[variablenode] (en2) at (3,2.75) {};  
    \node[variablenode] (en3) at (3,1.75) {};  
    \node[variablenode] (en4) at (3,-1.75) {};  
    \node[variablenode] (en5) at (3,-2.75) {};  
    \node (en0) at (3,0) {$\vdots$};
    \node[checknode] (dn1) at (5,3) {${\cal D}_{1}$};   
    \node[checknode] (dn2) at (5,1) {${\cal D}_{2}$};   
    \node[checknode] (dn3) at (5,-2) {${\cal D}_{r}$};   
    \node (xh1) at (6.5,3) {$\widehat{X}_{\fg(1)}^n$} ;
    \node (xh2) at (6.5,1) {$\widehat{X}_{\fg(2)}^n$} ;
    \node (xh3) at (6.5,-2) {$\widehat{X}_{\fg(r)}^n$} ;
    \node (xh0) at (6,0) {$\vdots$};
    \draw (cn1) -- (x1) (cn2) -- (x2) (cn3) -- (x3) (cn4) -- (x4) (cn5) -- (x5);
    \node (r1) at (1.5, 4) {$R_1$};
    \node (r2) at (1.5, 3) {$R_2$};
    \node (r3) at (1.5, 2) {$R_3$};
    \node (r4) at (1.5, -1.5) {$R_{s-1}$};
    \node (r5) at (1.5, -2.5) {$R_s$};
    \draw[thick,->] (cn1) -- (en1);
    \draw[thick,->] (cn2) -- (en2); 
    \draw[thick,->] (cn3) -- (en3);
   \draw[thick,->] (cn4) -- (en4);
   \draw[thick,->] (cn5) -- (en5);
    \draw[thick,->] (dn1) -- (xh1);
    \draw[thick,->] (dn2) -- (xh2); 
    \draw[thick,->] (dn3) -- (xh3); 
    \draw[thick,->] (en1) -- (dn1);
    \draw[thick,->] (en1) -- (dn2); 
    \draw[thick,->] (en2) -- (dn2);
    \draw[thick,->] (en3) -- (dn1);
    \draw[thick,->] (en3) -- (dn3);
    \draw[thick,->] (en4) -- (dn1);
    \draw[thick,->] (en5) -- (dn2);
    \draw[thick,->] (en5) -- (dn3);
   \end{tikzpicture}
 }

  \caption{
General multi-terminal noiseless source coding problem.
  }\label{fig:GMNSC}
\end{figure}
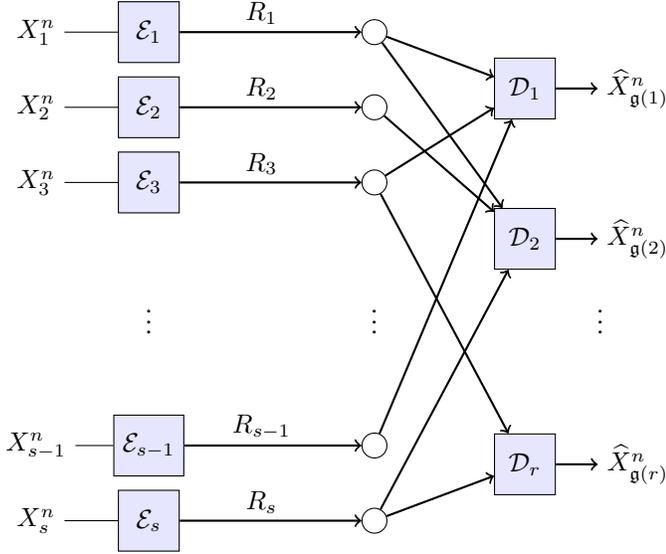

The problem of source coding,  when replacing classical sources with quantum sources, appears to be highly non-trivial in the first place\footnote{The first quantum source coding result  \cite{Schumacher:1995dg} took a much longer time to develop, considering that quantum theory began to evolve in the mid-1920s.}. The first quantum source coding theorem was established by Schumacher.  A quantum source $\rho_A$ can be losslessly compressed and decompressed if and only if the rate $R$ is above its von Neumann entropy\footnote{The subscript $A$ is a label to which the quantum system $\rho_A$ belongs.}:
\begin{equation}
R\geq H(A)_\rho, 
\end{equation}
where $H(A)_\rho:= -\tr \rho_A \log \rho_A$.

Schumacher's quantum source coding theorem bears a close resemblance to its classical counterpart. One will naturally expect that the same will hold true for the distributed source coding problem in the quantum regime. Consider that Alice, who has the quantum system $A$ of an entangled source $\rho_{AB}$, would like to merge her state to the distant party Bob.  The quantum distributed source coding theorem (also known as quantum state merging) aims to answer the optimal rate $R$ at which quantum information in system $A$ can be communicated faithfully to a party with quantum side information in system $B$. As it turns out, the optimal rate is given by the conditional von Neumann entropy $H(A|B)_\rho$, a quantum generalisation of classical conditional Shannon entropy. 
While the quantum formula to the distributed source coding problem is also of the form of conditional entropy, this result has a much deeper and profound impact in the theory of quantum information, as it marks a clear departure between classical and quantum information theory.   It is rather perplexing that the rate at which $R$ is quantified by the conditional entropy $H(A|B)_\rho$ can be negative. This major piece of the puzzle was resolved with the interpretation that if the rate is negative, the state can be merged, and in addition, the two parties will gain  $|H(A|B)_\rho|$ amount of entanglement for later quantum communication \cite{Horodecki:2005fv, Horodecki:2006hl, Dupuis:2014jz}. The distributed quantum source coding problem was later fully solved \cite{Abeyesinghe:2009ej,  Datta:2011vc} where the trade-off rate region between the quantum communication and the entanglement resource is derived.  The result is now called the fully quantum Slepian-Wolf theorem (FQSW).

Source coding with hybrid classical-quantum states $\rho_{XB}$, with $X$ representing a classical system and $B$ a quantum system, has also been considered in quantum information theory, and one of our main results falls into this category.  
In \cite{Devetak:2003kd}, Devetak and Winter considered classical source coding with quantum side information at the decoder, and showed that the optimal rate $R_1$ is given by $H(X|B)_\rho$. This result can be regarded as a classical-quantum version
of the source coding with (full) side information.

In this work, we first revisit the classical coded source compression \cite{Wyner:1975iv, Ahlswede:1975ea}. We significantly simplify the original proof, which might appear to be somewhat {ad hoc}. Instead, our arguably simpler achievability proof (Theorem~\ref{thm:CSCCH}) reveals a structure of the composition of known protocols.  We can achieve this due to the observation that connects coded source compression with classical channel simulation, a powerful information-theoretical tool obtained recently \cite{Bennett:2002fw}. The proof indicates that channel simulation is a general subroutine employed between the helper and the decoder in the task of coded source compression. The approach of composing protocols is advocated under the name of ``resource inequalities'', and has been successful in quantum information theory \cite{Devetak:2008eb, Hsieh:2010gd, Datta:2011vc}.

Next, we consider classical source coding with a quantum helper.  In our setup, the quantum side information is observed by the helper, and  the decoder only has a classical description from the quantum helper.  Although our problem can be regarded as a classical-quantum version of the classical helper problem studied in \cite{Wyner:1975iv, Ahlswede:1975ea}, in contrast to its classical counterpart,  our problem does not reduce to source coding with quantum side information studied in \cite{Devetak:2003kd}, even if there is no constraint on rate $R_2$.  However, when the ensemble that constitutes the quantum side information is commutative, our problem reduces to the classical helper problem \cite{Wyner:1975iv, Ahlswede:1975ea}. We completely characterize the rate region of the classical source coding with the quantum helper problem. In fact, the formulae describing the rate region (cf.~Theorem \ref{Theorem:rate-region}) resembles the classical counterpart  (cf.~\eqref{eq:classical-region-R1} and \eqref{eq:classical-region-R2}).  However, the proof technique is very different due to the quantum nature of the helper.   In particular, we use Winter's measurement compression theory \cite{Winter:2004uk} in the direct coding theorem. One of the interesting consequences of our result is that a helper's scheme that separately conducts a measurement and a compression is suboptimal; measurement compression appears to be necessary to achieve the optimal rate region.

We then extend the classical distributed source coding problem \cite{Wyner:1975iv, Ahlswede:1975ea} and its classical-quantum generalisation to the fully quantum version; namely compression of a quantum source with the help of a quantum server. We consider two natural scenarios depending on the types of communication channels, be it classical or quantum, shared between the legitimate sender and the receiver. Moreover, we allow the quantum helper to utilize the quantum resource of entanglement freely. This assumption could somehow be justified since the centralized server normally has more resources at his disposal. Our direct coding proofs compose two fundamental quantum protocols; the state merging protocol \cite{Horodecki:2005fv, Horodecki:2006hl} and the fully quantum Slepian-Wolf protocol, together with the quantum reverse Shannon theorem  \cite{Bennett:2014il}. 
The current progress of coded source compression is summarized in Table~\ref{table:main}. We hope that this observation will lead to concrete progress of the \emph{quantum} multi-terminal noiseless source coding problem. 
\begin{table}[ht!]
\label{table:main}
\begin{center}
\begin{tabular}{|c|c|c|}\hline
& {Classical Source} & Quantum Source\\ \hline
\multirow{1}{*}{Classical Helper} & {Refs.~\cite{Wyner:1975iv, Ahlswede:1975ea}} & $\times$  \\ \hline
\multirow{1}{*}{Quantum Helper} & {Theorem~\ref{Theorem:rate-region}} & \multirow{1}{*}{Theorem~\ref{thm_main}}  \\ \hline
\end{tabular}
\end{center}
\vspace{2mm}
\caption{Coded Source Coding in Classical and Quantum Regimes}
\end{table}%
 We remark that the quantum source compression with a classical helper is a very subtle task, which is left open even when the decoder has full classical side information \cite{Devetak:2003kd} (instead of partial side information from the classical helper).

The quantum source compression with a quantum helper is treated in a different scenario in Ref.~\cite{Horodecki:2006hl}. Over there, classical communication is allowed from the helper to the receiver, and limited entanglement resource is considered. As a result, their formula requires regularization. By contrast, our result resorts to a quantum reverse Shannon theorem, and has the appealing single-letter expression.

There is a huge amount of work devoted to both classical and quantum lossy source coding \cite{Shannon:1959tf, Berger71, Devetak:2002it, Datta:2013ur, Wilde:2013hp, Datta:2013jk}. We will restrict ourselves to only noiseless source coding in this work.  However, as it turns out, channel simulation simplifies both rate distortion theory and coded source compression. 

\emph{Notations.} In this paper, we will use capital letters $X,Y,Z,U$ etc.~to denote classical random variables, and lower cases $x,y,z,u$  to denote their realisations. We use $\cX,\cY,\cZ,\cU$ to denote the sample spaces. We denote $x^n=x_1x_2\cdots x_n$. 

A quantum state is a positive semi-definite matrix with trace equal to one. We will use $\rho$ or $\sigma$ to denote quantum states in this paper. Furthermore, we will reserve the notation $\tau$ to denote the completely mixed state. In case we need to specify which party the quantum state belongs to, we will use a subscript description $\rho_A$, meaning that the quantum system is held by A(lice). Letting $\{\proj{x}\}_{x\in\cX}$ be a set of orthonormal basis vectors, a classical-quantum state $\rho_{XB}$ is written as
\begin{align*}
\rho_{XB}=\sum_{x} p_{X}(x) \proj{x} \otimes \rho_{x},
\end{align*}
so that $n$ copies of it is
\begin{align*}
\rho_{XB}^{\otimes n}=\sum_{x^n} p_{X}^{(n)}(x^n) \proj{x^n} \otimes \rho_{x^n},
\end{align*}
where we denote $\rho_{x^n}:=\rho_{x_1}\otimes\cdots\otimes\rho_{x_n}$ for the sequence $x^n$. A positive-operator valued measure (POVM), $\Lambda=\{\Lambda_y\}_{y\in\cY}$, is a quantum measurement whose elements are non-negative self-adjoint operators on a Hilbert space so that $\sum_{y\in\cY}\Lambda_y = I$. For a quantum channel $\cE:A\to B$, we will denote its Stinespring extension by $U_{\cE}:A\to BC$.

Various entropic quantities will be used in the paper. The von Neumann entropy of a quantum state $\rho_A$, where the subscript $A$ represents the quantum state held by A(lice), is $H(A)_\rho=-\tr(\rho_A\log\rho_A) $. The conditional von Neumann entropy of system $A$ conditioned on $B$ of a bipartite state $\rho_{AB}$ is $H(A|B)_\rho = H(AB)_\rho - H(B)_\rho$. The quantum mutual information between two systems $A$ and $B$ of $\rho_{AB}$ is $I(A;B)_\rho= H(A)_\rho+H(B)_\rho-H(AB)_\rho$. The conditional quantum mutual information $I(A;B|C)_\rho = I(A;BC)_\rho-I(A;C)_\rho$.

We will also employ the framework of \textit{Resource Inequality (RI)} \cite{Devetak:2008eb, Datta:2011vc}. The RIs are a concise way of describing interconversion of  resources in an information-processing task. Denote by $[qq]$ and $[q\to q]$ an ebit (maximally entangled pairs of qubits) and a noiseless qubit channel, respectively. Then a quantum channel $\cN$ that can faithfully transmit $Q(\cN)$ qubits per channel use with an unlimited amount of entanglement assistance can be  symbolically represented as
\[
\langle \cN \rangle +\infty [qq] \geq Q(\cN) [q\to q],
\]
where $\langle \cN \rangle$ is an asymptotic noisy resource that corresponds to many independent uses, i.e. $\cN^{\otimes n}$. Schumacher's noiseless source compression \cite{Schumacher:1995dg} can be similarly expressed
\[
H(B)_\rho [q\to q] \geq  \langle \rho_B\rangle,
\] 
which means that a rate of $H(B)_\rho$ noiseless qubits \emph{asymptotically} is sufficient to represent the noisy quantum source $\rho_B$. 

Sometimes, the RI only applies to the relative resource, $\langle \cN:\rho \rangle$, which means that the asymptotic accuracy is achieved only when $n$ uses of $\cN$ are fed an input of the form $\rho^{\otimes n}$. For detailed treatment of combining two RIs and cancellations of quantum resources, see Ref.~\cite{Devetak:2008eb}.

This paper is organised as follows. In Sec~\ref{secI}, we revisit classical source compression with a helper using the channel simulation idea.  In Sec~\ref{secII}, we formally define the problem of source coding with a quantum helper, and present the main result as well as its proof. In Sec~\ref{sec_full_quantum}, we study the source coding with a helper in the fully quantum regime, and two different scenarios will be treated. We conclude in Sec~\ref{secIII} with open questions.

\section{Classical Coded Source Compression}\label{secI}

We first define the classical channel simulation capacity, paralleling the definition in \cite[Eq. (4)]{Bennett:2002fw}. Consider two classical channels $W_1:\mathcal{X}_1\to\mathcal{Y}_1$ and $W_2: \mathcal{X}_2\to\mathcal{Y}_2$. 
To be completely general, we also allow (sufficiently many) common randomness
$\overline{\Phi}_{AB}$ on $\mathcal{A} \times \mathcal{B}$ shared between Alice's encoding and Bob's decoding operations, where $\mathcal{A}=\mathcal{B}$.
For given integers $n$ and $m$, a channel simulation code is specified by 
\begin{itemize}
\item Alice's encoding operation $E_n: \mathcal{X}_2^m \times \mathcal{A} \to \mathcal{X}_1^n$ and 
\item Bob's decoding operation $D_n:\mathcal{Y}_1^n\times \mathcal{B} \to\mathcal{Y}_2^m$.
\end{itemize}
We say that a given channel simulation code is $(n,m,\epsilon)$-code if, for every $x^m\in \mathcal{X}_2^m$,
\[
\| W_2^{\otimes m}(x^m) - (I_A \otimes D_n) \circ ((W_1^{\otimes n}\circ E_n) \otimes I_B) (x^m\otimes \overline{\Phi}_{AB}) \|_1\leq \epsilon.
\]
Then, we say that rate $R$ is achievable if, for any $\epsilon,\delta>0$ and all sufficiently large $n$, there exists an $(n,m,\epsilon)$ code with $m \ge n(R-\delta)$.
The channel simulation capacity of the first channel $W_1$ needed to simulate the second channel $W_2$, $C(W_1\to W_2)$, is the supremum of all achievable rates.
Shannon's noiseless channel coding theorem can be seen as a special case of $C(W_1\to W_2)$ in which $W_2$ is the identity channel $\rm{id}$. 
Another special case is the following classical reverse Shannon Theorem where $W_1$ is the identity channel $\rm{id}$ (\cite[Theorem 2]{Bennett:2002fw} and \cite[Theorem 1(a)]{Bennett:2014il}).  
\begin{theorem}[Classical Reverse Shannon Theorem]  \label{thm:CRST}
\[
C({\rm id}\to N) = \frac{1}{C(N)},
\]
where $C(N)$ is the Shannon capacity of the channel $N$.
\end{theorem}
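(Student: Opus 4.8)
The plan is to prove the equality by two matching estimates on the optimal rate at which the noiseless channel ${\rm id}$ must be consumed to synthesize $N$: a converse showing that at least $C(N)$ bits of noiseless communication per simulated instance are unavoidable, and an achievability scheme meeting this rate. I will work in the standard setting of the reverse Shannon theorem in which the encoder $E_n$ and decoder $D_n$ may be stochastic and draw on free shared randomness; this is needed because a deterministic composition $D_n\circ{\rm id}^{\otimes n}\circ E_n$ collapses to a point mass and cannot approximate the genuinely stochastic output $N^{\otimes m}(\cdot\mid x^m)$.

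For achievability I would invoke a soft-covering (channel-resolvability) argument. Let $p_Y^\ast$ be the capacity-achieving output distribution of $N$. The shared randomness is taken to be a codebook $\{Y^m(w)\}_{w=1}^{2^{mR}}$ whose entries are drawn i.i.d.\ from $p_Y^{\ast\otimes m}$. On input $x^m$, the encoder picks an index $W$ with probability proportional to the likelihood ratio $\prod_{i=1}^m N(Y_i(w)\mid x_i)/p_Y^\ast(Y_i(w))$ and sends $W$ over the noiseless channel, costing $mR$ bits; the decoder outputs $Y^m(W)$. The soft-covering lemma guarantees that, on average over the codebook, the law of $Y^m(W)$ is close in the $\norm{\cdot}_1$ sense to $N^{\otimes m}(\cdot\mid x^m)$ as soon as $R$ exceeds the normalized divergence $\frac1m\sum_{i=1}^m D\!\left(N(\cdot\mid x_i)\,\|\,p_Y^\ast\right)$.

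The step I expect to be the main obstacle is enforcing the worst-case criterion $\forall x^m$ with a single codebook, since the covering threshold above depends on $x^m$. This is exactly where the choice $p_Y^\ast$ pays off: the Kuhn--Tucker characterization of the capacity-achieving distribution yields $D\!\left(N(\cdot\mid x)\,\|\,p_Y^\ast\right)\le C(N)$ for every letter $x$, hence $\frac1m\sum_i D\!\left(N(\cdot\mid x_i)\,\|\,p_Y^\ast\right)\le C(N)$ uniformly in $x^m$. Consequently any rate $R=C(N)+\delta$ drives the simulation error to zero uniformly over all inputs, so $C(N)$ bits per instance suffice.

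For the converse I would argue that simulation cannot create capacity. Any $(n,m,\epsilon)$ simulation realizes a channel within $\epsilon$ of $N^{\otimes m}$, whose capacity is $mC(N)$; by continuity of capacity this simulated channel still transmits $\approx mC(N)$ bits reliably. Yet all message information reaches the decoder through the $n$ noiseless bits alone, because the shared randomness is independent of the message and cannot convey it; a Fano/data-processing bound then gives $mC(N)-o(m)\le n$. Hence the optimal normalized rate equals $C(N)$, establishing $C({\rm id}\to N)=C(N)$.
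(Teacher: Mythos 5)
The paper never actually proves Theorem~\ref{thm:CRST}: it imports it as a black box from \cite{Bennett:2002fw} and only uses it as a subroutine in the proof of Theorem~\ref{thm:CSCCH}, so your proposal can only be compared with the original Bennett--Shor--Smolin--Thapliyal argument. Your proof is essentially correct, and you rightly flag something the paper glosses over: read literally, the paper's definition of an $(n,m,\epsilon)$ simulation code has deterministic $E_n,D_n$ and no common randomness, under which the statement is false; free shared randomness must be adjoined, exactly as you do. (Incidentally, under the paper's normalization $m/n$ the right-hand side should really be $1/C(N)$; your reading --- $C(N)$ noiseless bits per simulated channel use --- is the intended content.) Your achievability takes a genuinely different route from the original: Bennett et al.\ handle the worst-case ``$\forall x^m$'' requirement by first transmitting the empirical type of $x^m$ at negligible rate and then running a resolvability code tuned to that type, at rate $I(P,N)\le\max_P I(P,N)=C(N)$; you instead use a single codebook drawn from $p_Y^{*\otimes m}$ with a likelihood encoder and kill the worst case via the Kuhn--Tucker property $D\bigl(N(\cdot|x)\,\|\,p_Y^*\bigr)\le C(N)$ for \emph{every} letter $x$, which avoids the type bookkeeping entirely. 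One caveat: the covering statement you invoke is not the textbook soft-covering lemma, which only gives closeness averaged over an i.i.d.\ input $X^m\sim p_X^{\otimes m}$; you need its per-input-sequence variant, in which the threshold is governed by the information spectrum (equivalently the smooth max-divergence) of $\log\bigl(N^{\otimes m}(\cdot|x^m)/p_Y^{*\otimes m}\bigr)$, which for product distributions concentrates at $\sum_i D\bigl(N(\cdot|x_i)\,\|\,p_Y^*\bigr)$. That per-sequence lemma (the classical convex-split / rejection-sampling bound) is true and standard, but it should be stated and proved or cited explicitly, since it is exactly where the worst-case guarantee comes from. Your converse --- transfer a capacity-achieving code for $N^{\otimes m}$ to the simulated channel at a cost of $\epsilon$ in error probability, then Fano together with $I(V;\widehat{V})\le I(V;J,S)=I(V;J|S)\le n$ because the shared randomness $S$ is independent of the message $V$ --- is the standard argument and is sound.
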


Using a noiseless resource to simulate a noisy one seems like a useless task to explore at first glance. However, it turns out that such a task can be used to simplify coded source compression conceptually (among others). 

Consider a discrete memoryless source $(X,Y)$ with a joint distribution $p_{XY}(x,y)$ over $\cX\times\cY$. To begin with, we define an $(n,\epsilon)$ code for classical coded source compression that consists of the following:
\begin{itemize}
\item Alice's encoding operation $\varphi_A: {\mathcal X}^n\to {\mathcal M}$, where ${\mathcal M}:=\{1,2,\cdots,|{\mathcal M}|\}$ and $|{\mathcal M}|=2^{nR_1}$;
\item Bob's encoding operation $\varphi_B: \cY^n \to\cL$, where $\cL:=\{1,2,\cdots,|\cL|\}$ and $|\cL|=2^{nR_2}$;
\item a decoding operation $\cD:\cM\times\cL\to \widehat{\cX}^n$ 
\end{itemize}
so that
\[
\Pr\{X^n\neq \widehat{X}^n\}\leq \epsilon.
\]
A rate pair $(R_1,R_2)$ is said to be achievable if for any $\epsilon,\delta>0$ and all sufficiently large $n$, there exists an $(n,\epsilon)$ code with the rates $R_1+\delta$ and $R_2+\delta$. The rate region is defined as the collection of all achievable rate pairs.

\begin{theorem}[Classical source compression with a classical helper \cite{Wyner:1975iv, Ahlswede:1975ea}]\label{thm:CSCCH}
The optimal rate region for lossless source coding of $X$ with a classical helper $Y$ is the set of rate pairs $(R_1,R_2)$ such that 
\begin{align}
R_1 & \geq  H(X|U), \label{eq:classical-region-R1} \\
R_2 & \geq  I(U;Y), \label{eq:classical-region-R2} 
\end{align}
for some conditional distribution $p_{U|Y}(u|y)$, and $I(U;Y)$ is the classical mutual information between random variables $U$ and $Y$.
\end{theorem}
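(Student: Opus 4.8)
The plan is to prove the achievability (direct) part through channel simulation, which is the conceptually new ingredient, and to dispatch the converse by a standard single-letterization. Throughout, fix a test channel $p_{U|Y}(u|y)$ and let $P_{XYU} = P_{XY}\,p_{U|Y}$ be the induced joint law, so that the Markov chain $X - Y - U$ holds by construction. The protocol then decomposes into two conceptually independent stages, one run by the helper at rate $R_2$ and one run by the encoder of $X$ at rate $R_1$.

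\emph{First stage (helper, rate $R_2$).} The helper observes $Y^n$ and must deliver to the decoder a sequence $U^n$ jointly typical with $(X^n,Y^n)$ under $P_{XYU}$. This is exactly the task of simulating the memoryless channel $p_{U|Y}$, driven by the source $Y^n$, across the noiseless link to the decoder; this is the channel-simulation viewpoint that replaces the traditional covering argument and is a source-driven instance of the classical reverse Shannon theorem (Theorem~\ref{thm:CRST}). Concretely, I would fix a codebook of roughly $2^{nI(U;Y)}$ sequences drawn i.i.d.\ from $P_U^{\otimes n}$, shared between helper and decoder; by the covering lemma, with high probability every typical $Y^n$ admits a codeword $U^n$ jointly typical with it, and the helper transmits that codeword's index. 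This costs rate $R_2 > I(U;Y)$, matching \eqref{eq:classical-region-R2}, and lets the decoder reconstruct $U^n$.

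\emph{Second stage (encoder of $X$, rate $R_1$).} Once $U^n$ is available at the decoder as side information, recovering $X^n$ is ordinary lossless source coding with side information: binning the typical $X^n$ sequences into $2^{nH(X|U)}$ bins and sending the bin index lets the decoder single out the unique $X^n$ in the received bin that is jointly typical with $U^n$. This succeeds whenever $R_1 > H(X|U)$, matching \eqref{eq:classical-region-R1}. Combining the two stages and letting $n\to\infty$ yields achievability of every pair in the stated region. For the converse I would take any sequence of codes with vanishing error, apply Fano's inequality, and identify the auxiliary $U_i := (M_2, Y^{i-1})$ with $M_2$ the helper's message; the chain rule (together with the fact that the $Y_i$ are i.i.d.) gives $nR_2 \ge \sum_i I(U_i;Y_i)$, while Fano plus standard entropy manipulations give $nR_1 \ge \sum_i H(X_i|U_i) - n\epsilon_n$, and a time-sharing argument single-letterizes both bounds. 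One verifies the required Markov relation $U_i - Y_i - X_i$ directly from the i.i.d.\ structure of the source pairs.

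The step I expect to be the main obstacle is the typicality bookkeeping that links the two stages. The channel-simulation/covering step in the first stage only guarantees that the reconstructed $U^n$ is jointly typical with $Y^n$, whereas the binning argument in the second stage needs joint typicality with $X^n$. Propagating typicality across the Markov chain $X - Y - U$ via the Markov lemma --- and confirming that the fixed shared codebook (rather than a rate-consuming common-randomness resource) suffices for the decoder to reproduce $U^n$ --- is the technical heart of the argument; the remaining covering and binning analyses are routine.
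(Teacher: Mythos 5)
Your proposal is correct, but its technical core is genuinely different from the paper's. The paper's direct part uses channel simulation as a black box: by the source-specific form of the reverse Shannon theorem (Theorem~\ref{thm:CRST}), classical rate $I(U;Y)$ plus shared randomness lets the helper and decoder produce $\widetilde{U}^n$ whose joint law with the source satisfies $\| p_{UX}^{\otimes n} - q_{\widetilde{U}^n X^n}\|_1 \le \epsilon$; crucially, this closeness automatically includes $X^n$, because $X^n$ is generated from $Y^n$ by the same memoryless channel under both laws and total variation is monotone under further processing. Hence any Slepian--Wolf code with small error under $p_{XU}^{\otimes n}$ has error at most $\epsilon$ larger under the simulated law, the two stages compose with no typicality bookkeeping at all, and the shared randomness is removed by standard derandomization (fix a realization achieving at most the average error), exactly as the paper does explicitly in the quantum-helper proof. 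You, by contrast, implement the first stage with an explicit covering codebook, which only guarantees joint typicality of $(U^n,Y^n)$, and so you must invoke the Markov lemma to propagate typicality across $X - Y - U$; this is the traditional Wyner/Ahlswede--K\"orner achievability argument (covering + Markov lemma + binning), and it is sound, but the obstacle you single out as ``the technical heart'' is precisely the step the paper's route eliminates --- the variational-distance guarantee of full channel simulation is strictly stronger than joint typicality and subsumes the Markov lemma. What each approach buys: yours is elementary and self-contained; the paper's is modular, and that modularity is the point --- outsourcing the combinatorics to the simulation theorem is what lets the same proof skeleton carry over to the quantum helper (Theorem~\ref{Theorem:rate-region}), where no good Markov lemma exists and measurement compression plays the role of the reverse Shannon theorem. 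Your converse sketch (Fano, $U_i=(M_2,Y^{i-1})$, time sharing) is standard and correct; note the paper does not re-prove the converse for this theorem but cites the original references.
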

With the help of Theorem~\ref{thm:CRST}, we can provide a simpler direct coding theorem for Theorem~\ref{thm:CSCCH}.  

\begin{proof}
In the proof, the strategy that the classical helper employs can be conceptually viewed as assisting the decoder to simulate the local channel $p_{U|Y}$. For $n$ sufficiently large, the classical communication rate that the helper needs to send is $I(U;Y)$ and
\[
\| p_{UX}^n - q_{U_n X^n} \|_1\leq \epsilon,
\]
where $q_{U_n X^n}$ is the joint distribution induced by the simulation of $p_{U|Y}$.
Thus, the full side information about $U^n$ is possessed by the decoder, and the source compression with full side information can be carried out.  Since the helper's local channel can be simulated at the decoder whose inaccuracy is at most $\epsilon$, the overall error for classical source compression with a classical helper can be achieved with this additional $\epsilon$ error. 
\end{proof}

\section{Classical Source Compression with a Quantum Helper}\label{secII}
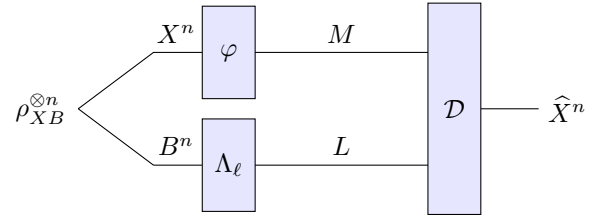
\begin{figure}
\centerline{
    \begin{tikzpicture}[scale=1][very thick]
    \fontsize{10pt}{1} 
    \tikzstyle{halfnode} = [draw,fill=white,shape= underline,minimum size=1.0em]
    \tikzstyle{checknode} = [draw,fill=blue!10,shape= rectangle,minimum height=3.5em, minimum width=2em]
    \tikzstyle{checknode2} = [draw,fill=blue!10,shape= rectangle,minimum height=8em, minimum width=2em]
    \tikzstyle{variablenode} = [draw,fill=white, shape=circle,minimum size=0.8em]
    \node (e1) at (-2.5,1) {$\rho_{XB}^{\otimes n}$} ;
    \node (p3) at (4.5,1) {$\widehat{X}^n$} ;
    \node (w2) at (1.5,2) {$M$} ;
    \node (l2) at (1.5,0.5) {$L$} ;
    \node (s2) at (-0.7,2) {$X^n$} ;
    \node (e2) at (-0.7,0.5) {$B^n$} ;
     \draw  (-2,1)-- (-1,1.75) (-2,1)-- (-1,0.25) (-1,0.25) -- (0,0.25) (-1,1.75) -- (0,1.75);
     \draw (3.3,1) --++ (p3);
     \draw (0.3,0.25) --(3,0.25) (0.3,1.75)--(3,1.75);
    \node[checknode] (cn1) at (0,1.75) {${\varphi}$};
    \node[checknode2] (cn2) at (3,1) {${\cal D}$};
    \node[checknode] (cn3) at (0,0.25) {${\Lambda}_{\ell}$};  
   \end{tikzpicture}
 }

  \caption{Classical Source Compression with a Quantum Helper.}\label{fig:QSCQH0}
\end{figure}

As shown in Figure~\ref{fig:QSCQH0}, the protocol for classical source coding with a quantum helper involves two senders, Alice and Bob, and one receiver, Charlie. Initially Alice and Bob hold $n$ copies of a classical-quantum state $\rho_{XB}$.
In this case, Alice holds classical random variables $X^n$ while Bob (as a helper) holds a quantum system $B^n$ that is correlated to Alice's message. The goal is for the decoder, Charlie, to faithfully recover Alice's message when assisted by the quantum helper, Bob.

We now proceed to formally define the coding procedure. We define an $(n,\epsilon)$ code for classical source compression with a quantum helper that consists of the following:
\begin{itemize}
\item Alice's encoding operation $\varphi: \cX^n \to \cM$, where $\cM:=\{1,2,\cdots, |\cM|\}$ and $|\cM|=2^{nR_1}$;
\item Bob's POVM $\Lambda=\{\Lambda_\ell\}: B^n \to \cL$, where $\cL:=\{1,2,\cdots,|\cL|\}$ and $|\cL|=2^{nR_2}$;
\item  Charlie's decoding operation $\cD:\cM\times\cL\to \widehat{\cX}^n$
\end{itemize}
so that the error probability satisfies
\begin{align}\label{eq_cond_cq}
\Pr\{X^n\neq \widehat{X}^n\}\leq \epsilon.
\end{align}

A rate pair $(R_1,R_2)$ is said to be \emph{achievable} if for any $\epsilon,\delta>0$ and all sufficiently large $n$, there exists an $(n,\epsilon)$ code with the rates $R_1+\delta$ and $R_2+\delta$. The rate region is then defined as the collection of all achievable rate pairs. Our main result is the following theorem. 

\begin{theorem} \label{Theorem:rate-region}
Given is a classical-quantum source $\rho_{XB}$. The optimal rate region for lossless source coding of $X$ with a quantum helper $B$ is the set of rate pairs $(R_1,R_2)$ such that 
\begin{align}
R_1 &\geq  H(X|U) \\
R_2 &\geq  I(U;B)_\sigma. 
\end{align}
The state $\sigma_{UB}(\Lambda)$ resulting from Bob's application of the POVM $\Lambda=\{\Lambda_u\}_{u\in\cU}$ is 
\begin{align}\label{eq_CQSTS}
\sigma_{UB}(\Lambda) = \sum_{u\in\cU} p_U(u) \proj{u}\otimes \rho_u
\end{align}
where
\begin{align}
p_U(u) &= \Tr(\rho_B\Lambda_u) \\
\rho_u &= \frac{1}{p_U(u)} [\sqrt{\rho_B} \Lambda_u \sqrt{\rho_B}]^* \\
\rho_B&= \sum_x p_X(x)\rho_x.
\end{align}
where $*$ denotes complex conjugation in the standard basis. Furthermore, we can restrict 
the size of POVM as $|\cU| \le d_B^2$, where $d_B$ is the dimension of Bob's system. 
\end{theorem}

A typical shape for the rate region in Theorem \ref{Theorem:rate-region} is illustrated in Figure~\ref{Fig:region}. When there is no constraint on $R_2$, rate $R_1$ can be decreased to be as small as
\begin{align}
H(X|U^*) &:= \min_{\Lambda} H(X|U) \\
&= H(X) - \max_{\Lambda} I(X; U) \\
&= H(X) - I_{\mathrm{acc}},
\end{align}
where $I_{\mathrm{acc}}$ is the accessible information for the ensemble $\{ p_X(x), \rho_x \}_{x\in{\cal X}}$.
Unless the ensemble commutes \cite{Holevo:1973wo},
the minimum rate $H(X|U^*)$ is larger than the rate $H(X|B)_\rho$, which is the optimal rate
in the source coding with quantum side information \cite{Devetak:2003kd}. 
To achieve $R_1 = H(X|U^*)$, it suffices to have $R_2 \ge I(U^*;B)_\sigma$, which is smaller than
$H(U^*)$ in general. This means that the following separation scheme is suboptimal:
first conduct a measurement to get $U^*$ and then compress $U^*$.
For more detail, see the direct coding proof.

\begin{figure}[t]
\centering{
\begin{minipage}{.4\textwidth}
\includegraphics[width=\textwidth]{./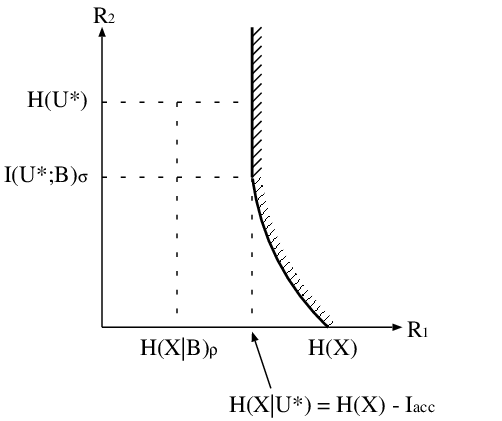}
\caption{A typical shape for the rate region in Theorem \ref{Theorem:rate-region}.}
\label{Fig:region}
\end{minipage}
}
\end{figure}

\subsubsection{Converse Proof}

Let $\varphi: {\cal X}^n \to {\cal M}$ be Alice's encoder, and let
$\{ \Lambda_{\ell} \}_{\ell \in {\cal L}}$ be Bob's measurement. 
Alice sends $M = \varphi(X^n)$ to the decoder, and Bob sends the measurement outcome $L$ to the decoder. Fano's inequality states that $H(X^n|M,L)\leq n\epsilon_n$ for some $\epsilon_n\to 0$ as $n\to \infty$, given the condition (\ref{eq_cond_cq}) holds. 

First, we have the following bound:
\begin{align}
\log |{\cal M}| 
&\ge H(M) \\
&\ge H(M|L) \\
&\ge H(M|L) - H(M|L,X^n) \\
&= I (X^n;M|L) \notag\\
&= H(X^n|L) - H(X^n|ML) \\
&\stackrel{(a)}\ge H(X^n|L) - n \epsilon_n \\
&\stackrel{(b)}\ge \sum_{t=1}^n H(X_t|X_{<t},L) - n \epsilon_n \\
& \stackrel{(c)}= \sum_{t=1}^n H(X_t|U_t) - n \epsilon_n \\
& \stackrel{(d)}= n H (X_J|U_J, J) - n \epsilon_n,
\end{align}
where $(a)$ follows from Fano's inequality: $H(X^n|M,L)\leq n \epsilon_n$ for some $\epsilon_n\to 0$ as $n \to \infty$; (b) uses the chain rule and denotes $X_{<t} := (X_1,\ldots,X_{t-1})$; in $(c)$, we denote $U_t:=(X_{<t},L)$; $(d)$ introduces a time-sharing random variable $J$ that is uniformly distributed in the set $\{1,2,\cdots n\}$.

Denote the state after the helper's measurement $\{\Lambda_\ell\}$ by
\[
\omega_{LX^nB^n} : = \sum_{\ell} p_{L|X^n}(\ell|x^n)p_{X^n}(x^n) \proj{\ell} \otimes\proj{x^n} \otimes \rho_{\ell x^n}^{B^n},
\]
where 
\begin{eqnarray*}p_{L|X^n} (\ell|x^n)&=& \tr [\rho_{x^n}\Lambda_\ell] \\
\rho_{\ell x^n}^{B^n}&=& \frac{1}{p_{L|X^n}(\ell |x^n)}  [\sqrt{\rho_{x^n}}\Lambda_\ell \sqrt{\rho_{x^n}}]^*.
\end{eqnarray*} 
The following equations are evaluated on the state $\omega_{LX^nB^n}$: 
\begin{align}
\log |{\cal L}|
&\ge H(L) \\
&\ge I(L ; B^n) \\
&= \sum_{t=1}^n I(L; B_t|B_{<t}) \\
&= \sum_{t=1}^n I(L, B_{<t}; B_t) \\
&\stackrel{(a)}{=} \sum_{t=1}^n I(L, B_{<t}, X_{<t}; B_t) \\
&\ge \sum_{t=1}^n I(L, X_{<t}; B_t) \\
& = \sum_{t=1}^n I(U_t; B_t) \label{eq_L00}
\end{align}
where $(a)$ follows from 
\begin{align}
I(X_{<t} ; B_t|L,B_{<t}) 
&\le I(X_{<t}; B_t,B_{>t}|L,B_{<t}) \\
&= H(X_{<t}|L,B_{<t}) - H(X_{<t}|L,B^n) \\
&\le H(X_{<t}|B_{<t}) - H(X_{<t}|L,B^n)\\
&= H(X_{<t}|B_{<t}) - H(X_{<t}|B^n)\label{eq_L01}  \\
&= H(X_{<t}|B_{<t}) - H(X_{<t}|B_{<t}) \\
&= 0.
\end{align}
The equality in Eq.~(\ref{eq_L01}) follows since $L-B^n -X^n$ is a Markov chain.

Following from Eq.~(\ref{eq_L00}), we can again introduce a time-sharing random variable $J$ that is uniformly distributed in the set $\{1,2,\cdots, n\}$: 
\begin{align}
\sum_{t=1}^n I(U_t; B_t) & = n \sum_{t=1}^n  I(U_t; B_t |J=t) \\
&= n I(U_J; B_J|J) \\
&= n I(U_J J;B_J)
\end{align}
where the last equality follows because $I(J;B_J)=0$. To get the single-letter formula, define $X:=X_J$, $B:=B_J$, $U:=(U_J,J)$ and let $n\to \infty$:
\begin{align}
R_1 &= \frac{1}{n} \log |\cM| \geq H(X|U) \\
R_2 &=\frac{1}{n} \log|\cL| \geq I(U;B).
\end{align}
Note that the distribution of $U_t = (L,X_{<t})$ can be written as
\begin{align}
p_{X_{<t} L}(x_{<t},\ell) &=  \left(\prod_{i < t} p_X(x_i)\right) \times \notag \\
 & \Tr\left[ \left\{ \left( \bigotimes_{i < t} \rho_{x_i} \right) \otimes \rho_{B_t}
  \otimes \left( \bigotimes_{i > t} \rho_{B_i} \right)\right\}  \Lambda_\ell  \right],
\end{align}
where $\rho_{B_j} =\sum_{x_j\in\cX} p_X(x_j) \rho_{x_j}$. Thus, we can get $U_t$ as a measurement outcome of $B_t$ by first generating $X_{<t}$, then
by appending $\bigotimes_{i < t} \rho_{x_i}$ and $\bigotimes_{i > t} \rho_{x_i}$ to ancillae systems, 
and finally by conducting the measurement $\{ \Lambda_\ell \}_{\ell \in {\cal L}}$.

Finally, the bound on $|{\cal U}|$ can be proved via Carath\'odory's theorem (cf.~\cite[Appendix C]{Devetak:2005ea}).

\subsubsection{Direct Coding Theorem}
Fix a POVM $\Lambda=\{\Lambda_u\}_{u\in\cU}$. It induces a conditional probability $p_{U|X}(u|x)= \Tr [\Lambda_u\rho_x]$, and a joint probability distribution 
\begin{equation}\label{eq_Pau}
P_{XU}(x,u)=p_X(x)p_{U|X}(u|x).
\end{equation} 

The observation made in the achievability proof is the application of Winter's measurement compression theory \cite{Winter:2004uk}. 

\begin{theorem}[Measurement compression theorem \cite{Winter:2004uk, Wilde:2012iq}] \label{thm:meas-comp}Let $\rho_A$ be a source state and $\Lambda$ a POVM\ to simulate on this state. A protocol for a
faithful simulation of the POVM\ is achievable with classical communication rate $R$ and common randomness rate $S$ if and only if the following set of inequalities hold%
\begin{align}
R   \geq I\left(  X;R\right) ,~~
R+S   \geq H\left(  X\right)  ,
\end{align}
where the entropies are with respect to a state of the following form:%
\begin{equation}
\sum_{x}\left\vert x\right\rangle \left\langle x\right\vert_{X}
\otimes \text{Tr}_{A}\left\{  \left(  I_{R}\otimes\Lambda_{x}
\right)  \phi_{RA}\right\}  ,\label{eq:IC-state}%
\end{equation}
and $\ket{\phi_{RA}}$ is some purification of the state $\rho_A$.
\end{theorem}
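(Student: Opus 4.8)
The plan is to prove the two inequalities as a converse and to match them with a random-coding achievability scheme, where throughout the reference state is the classical--quantum state $\omega^{XR}$ of~(\ref{eq:IC-state}). The key point to keep in mind is that \emph{faithful} simulation is stronger than reproducing the outcome statistics $p_X$ alone: the protocol must also leave the purifying reference system $R$ in the correct post-measurement state, and it is precisely this coherence constraint that makes $I(X;R)$, rather than merely $H(X)$, the relevant quantity for the communication rate.

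For the converse, suppose an $(n,\epsilon)$ protocol uses a message $M\in\{1,\dots,2^{nR}\}$ and common randomness $K\in\{1,\dots,2^{nS}\}$ independent of the input $\phi^{RA\otimes n}$, and let $\widehat X^n$ be the simulated outcome, a function of $(M,K)$. First I would obtain $R+S\ge H(X)$ from $n(R+S)\ge H(M,K)\ge H(\widehat X^n)$ together with a Fano/continuity estimate using $\widehat X^n\approx X^n$. Next, since $K$ is independent of $R^n$ and the reference is untouched by the protocol, $nR\ge H(M)\ge I(M;R^n)\ge I(\widehat X^n;R^n)$, and continuity of mutual information applied to the $\epsilon$-faithful state gives $I(\widehat X^n;R^n)\ge nI(X;R)-o(n)$, yielding $R\ge I(X;R)$.

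For achievability I would use a random-coding construction on $\rho_A^{\otimes n}$ and $\Lambda^{\otimes n}$. Restricting attention to the typical outcome set, draw $2^{n(R+S)}$ outcome sequences $x^n(m,k)$ i.i.d.\ from $p_X^{\otimes n}$, indexed by a message $m\in\{1,\dots,2^{nR}\}$ and a common-randomness value $k\in\{1,\dots,2^{nS}\}$. For each $k$, assemble a rescaled, typical-projected sub-POVM $\{\Lambda^{(k)}_m\}_m$ from the operators $\Lambda_{x^n(m,k)}$; the protocol has Alice apply $\{\Lambda^{(k)}_m\}_m$ for the shared $k$, send $m$, and Bob output $x^n(m,k)$. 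Two facts must be verified with the operator Chernoff bound: that the average $2^{-nS}\sum_{k,m}\Lambda^{(k)}_m$ is close to the projector onto the typical subspace, so that each family is an approximate POVM and the simulation is trace-preserving; and that the induced state $2^{-nS}\sum_{k,m}\proj{x^n(m,k)}\otimes\Tr_A\{(I\otimes\Lambda^{(k)}_m)\phi^{\otimes n}\}$ is $\epsilon$-close to $\omega^{XR\otimes n}$. Concentration for the reference-correlated part needs $\gtrsim 2^{nI(X;R)}$ codewords per value of $k$, forcing $R\ge I(X;R)$, while covering the whole typical outcome set needs $\gtrsim 2^{nH(X)}$ codewords in total, forcing $R+S\ge H(X)$.

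The main obstacle is the second verification --- getting the residual state on $R$ right rather than merely the classical statistics. This forces the operator Chernoff bound to be applied to operator-valued quantities after the $\sqrt{\rho}^{-1}$-type rescaling and typical projection, where one must simultaneously control the failure probability of the concentration, bound the cross terms, and guarantee that the rescaled family remains a legitimate (sub)POVM. Closing the small gentle-measurement and typicality errors into a single $\epsilon$ via the triangle inequality, and then invoking the standard expurgation/derandomization to fix a good codebook, completes the argument.
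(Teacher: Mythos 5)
The paper never proves this theorem: it is imported verbatim, with citation, from Winter's measurement-compression paper \cite{Winter:2004uk} and from Wilde--Hayden--Buscemi--Hsieh \cite{Wilde:2012iq}, and is then used as a black box inside the direct coding proof of Theorem~\ref{Theorem:rate-region}. So there is no in-paper argument to compare against; the relevant benchmark is the proof in those references, and your sketch is essentially a faithful reconstruction of it. Both the converse (Fano plus data processing against the untouched reference) and the achievability (random codebook of typical outcome sequences, rescaled typical-projected sub-POVMs, operator Chernoff bound to verify both trace preservation and closeness of the joint outcome--reference state) follow the structure of Winter's original argument, and you correctly identify the crux: faithfulness on the reference system $R$, not just on the outcome statistics, is what forces the rate $I(X;R)$ and what makes the operator-valued concentration step the hard part.

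One concrete slip in your converse: the chain $nR \ge H(M) \ge I(M;R^n) \ge I(\widehat{X}^n;R^n)$ is not valid as written, because $\widehat{X}^n$ is a function of $(M,K)$, not of $M$ alone, so the last step is not data processing. You state the needed ingredient (independence of $K$ from $R^n$) but deploy it in the wrong place. The correct chain is
\begin{align}
nR \;\ge\; H(M) \;\ge\; H(M|K) \;\ge\; I(M;R^n|K) \;=\; I(M,K;R^n) \;\ge\; I(\widehat{X}^n;R^n),
\end{align}
where the equality uses $I(K;R^n)=0$ and the final step is data processing applied to the pair $(M,K)$. After this repair, the continuity estimate $I(\widehat{X}^n;R^n)\ge nI(X;R)-o(n)$ goes through because the ideal joint state is i.i.d., and the rest of your outline is sound, though at the level of a sketch: the operator Chernoff verification you defer (legitimacy of the sub-POVM, cross terms, and the two separate concentration requirements giving $R\ge I(X;R)$ and $R+S\ge H(X)$) is exactly where the bulk of Winter's proof lives.
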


Let $K$ be a random variable on ${\cal K}$, which describes the common randomness shared between
Alice and Bob. Let 
$\{ \widetilde{\Lambda}_{u^n}^{(k)} \}_{u^n \in {\cal U}^n}$ be collection of POVMs. Let 
\begin{align}
Q_{X\widetilde{U}}^n(x^n,u^n) := P_X^{(n)}(x^n) \sum_{k \in {\cal K}} \frac{1}{|{\cal K}|} {\rm{Tr}}[ \rho_{x^n} \widetilde{\Lambda}_{u^n}^{(k)}],
\end{align}
where $P_X^{(n)}(x^n):=P_X(x_1)\times\cdots\times P_X(x_n)$.
The faithful simulation of $n$ copies of POVM $\Lambda:=\{ \Lambda_u \}_{u\in {\cal U}}$, i.e.~$\Lambda^{\otimes n}$, implies that for any $\epsilon>0$ and $n$ sufficiently large, there exist POVMs $\{\widetilde{\Lambda}^{(k)}\}_{k\in\cK}$, where $\widetilde{\Lambda}^{(k)}:=\{\widetilde{\Lambda}_{u^n}^{(k)} \}_{u^n \in {\cal U}^n}$, with  
\begin{align} \label{eq:faithful-simulation}
\frac{1}{2} \| P_{XU}^{(n)} - Q_{X\widetilde{U}}^{n} \|_1 \le \epsilon. 
\end{align}
Note that Winter's measurement compression theorem can be expressed in terms of a resource inequality (RI) as follows:
\[
I(X;R)[c\to c] + H(X|R) [cc] \geq \langle \Lambda: \rho_A \rangle, 
\] 
where the entropic quantities are defined with respect to the state in Eq.~(\ref{eq:IC-state}).


Now, we are ready to prove the direct coding theorem. The key insight is the equivalence between the role of a quantum helper in the coded source compression and the simulation of a quantum-classical channel (i.e.~a quantum measurement).
 
\emph{Bob's coding.} Recall the state $\sigma_{UB}(\Lambda)$ in Eq.~(\ref{eq_CQSTS}), where a classical random variable $U$ is generated after Bob's measurement $\Lambda=\{\Lambda_u\}_{u\in\cU}$ acting on his quantum system $B$. Note that Alice and Bob now hold classical random variables $X^n$ and $U^n$, respectively. Instead of performing the purely classical coding strategy on $X^nU^n$ stated in Theorem~\ref{thm:CSCCH}, the decoder Charlie can directly simulate the measurement outcome $U^n$ using Winter's  measurement compression theorem  \cite{Winter:2004uk, Wilde:2012iq}. 
Then Theorem~\ref{thm:meas-comp} promises that by sending $R_2\geq I(U;B)_\sigma$ from Bob to the decoder Charlie, Charlie will have a local copy $\widetilde{U}^{ n}$ that is $\epsilon$-close to $U^n$. Furthermore, the distribution $Q^n_{X\widetilde{U}}$ between Alice and Charlie  will satisfy Eq.~(\ref{eq:faithful-simulation}). 

\emph{Alice's coding.}  Alice's strategy is very simple once Charlie has had $\widetilde{U}^n$. She just uses the Slepian-Wolf coding as if she starts with the distribution $P_{XU}$ with Charlie. In fact, it is well known (cf.~\cite{Cover}) that an encoder $\varphi:\cX^n \to \cM$ and a decoder $\cD:\cM \times \cU^n \to \cX^n$ exist such that $|\cM| = 2^{n(H(X|U)+\delta)}$ and 
\begin{align}
P^{(n)}_{XU}({\cal A}^c) \le \epsilon
\end{align}
for sufficiently large $n$, where 
\begin{align}
{\cal A} := \{ (x^n,u^n) \in \cX^n \times \cU^n: \cD(\varphi(x^n),u^n) = x^n \}
\end{align}
is the set of correctably decodable pairs.

Now, suppose that Alice and Bob use the same code for the simulated distribution $Q_{X\widetilde{U}}^n$. Then, by the definition of the variational distance and Eq.~\eqref{eq:faithful-simulation}, we have
\begin{align}
 Q^n_{X\tilde{U}}({\cal A}^c) \le P^{(n)}_{XU}({\cal A}^c) + \epsilon.
\end{align}
Thus, if we can find a good code for $P_{XU}^{(n)}$, we can also use that code for $Q^n_{X\widetilde{U}}$ for a sufficiently large $n$.

\emph{Derandomization.} The standard derandomization technique works here. If the random coding strategy works fine on average, then there is one realisation that also works. Since the distribution $Q^{n}_{X\widetilde{U}} = \frac{1}{|\cK|} \sum_{k\in\cK}Q^n_{X\widetilde{U}|k}$, we have
\begin{align}
\sum_k \frac{1}{|{\cal K}|} Q^n_{X\widetilde{U}|K=k}({\cal A}^c) = 
 Q^n_{X\tilde{U}}({\cal A}^c) \le P_{XU}({\cal A}^c) + \epsilon.
\end{align}
Thus, there exists one $k\in\cK$ so that $Q^n_{X\widetilde{U}|k}(\cA^c)$ is small. 


\section{Fully Quantum Source Compression with a Quantum Helper}
\label{sec_full_quantum}

The coding scenario for quantum sources with a quantum helper is potentially more complicated than purely classical settings due to the existence of entanglement. We will discuss two scenarios that depend on the types of communication channels between the legitimate sender and the receiver. In both settings, we assume that the helper (or the centralized server) possesses quantum resources of entanglement shared between him and the receiver. Before showing the main results in this section, we first present relevant quantum protocols that prove crucial in the proof of achievability. 

\textit{Relevant quantum protocols.} Given a bipartite state $\rho_{AB}$ whose purification is $\ket{\psi_{ABR}}$, the state merging protocol \cite{Horodecki:2005fv, Horodecki:2006hl, Dupuis:2014jz} is the information-processing task of distributing the $A$-part of the system, that originally belongs to Alice, to the distant Bob without altering the joint state. Moreover, Alice and Bob have access to pre-shared entanglement and their goal is to minimise the number of EPR pairs consumed during the protocol. The state merging can be efficiently expressed as the following RI:
\begin{equation}\label{eq_SM}
\langle \psi_{A|B|R}\rangle +I(A;R)_\psi[c\to c] +H(A|B)_\psi[qq] \geq \langle \psi_{|AB|R}\rangle
\end{equation}
where the notation $\psi_{A|B|R}$ denotes the state is originally shared between three distant parties Alice, Bob, and Eve, while $\psi_{|AB|R}$ means that the system $A$ is now together with the system $B$. This protocol involves classical communication; however, for the purpose of this section, quantum resources are much more valuable and classical communication is considered to be free. As a result, the state merging protocol either consumes EPR pairs with rate $H(A|B)_\psi$ when this quantity is positive, or generates EPR pairs with rate $|H(A|B)_\psi|$ for later uses, if $H(A|B)_\psi$ is negative, after the transmission of the system $A$ to $B$. 

The state merging protocol gives the first operational interpretation to the conditional von Neumann entropy. More importantly, it provides an answer to a formerly long-standing puzzle---the conditional von Neumann entropy can be negative, a situation that has no classical correspondence.

The fully quantum Slepian-Wolf (FQSW) protocol \cite{Abeyesinghe:2009ej, Datta:2011vc} can be considered as the coherent version of the state merging protocol. It can be described as
\begin{equation}\label{eq_FQSW}
\langle \psi_{A|B|R}\rangle + \frac{1}{2} I(A;R)_\psi [q\to q] \geq \frac{1}{2} I(A;B)_\psi [qq] +\langle \psi_{|AB|R}\rangle.
\end{equation} 
It is a simple exercise to show, via the resource inequalities, that the state merging protocol can be obtained by combining teleportation with the FQSW 
protocol \cite{Abeyesinghe:2009ej, Datta:2011vc}. Moreover, the FQSW protocol can be transformed into a version of the quantum reverse Shannon theorem (QRST) that involves entanglement assistance \cite{Abeyesinghe:2009ej}. 

The quantum reverse Shannon theorem (QRST) addresses a fundamental task that asks, given a quantum channel $\cN$, how much quantum communication is required from Alice to Bob so that the channel $\cN$ can be simulated. There are variants of the QRST depending on whether entanglement or feedback is allowed in the simulation (see \cite[Theorem 3]{Bennett:2014il}).  The QRST protocol has become a powerful tool in quantum information theory. It can be used to establish a strong converse to the entanglement-assisted capacity theorem. Moreover, it can also be used to establish quantum rate distortion theorems  \cite{Datta:2013ur, Wilde:2013hp, Datta:2013jk}.

In this paper, we will use the QRST with entanglement assistance  \cite[Theorem 3(a)]{Bennett:2014il}. 
\begin{theorem}[Quantum Reverse Shannon Theorem]\label{thm_QRST}
 Let $\cN$ be a quantum channel from $A$ to $B$ so that its isometry $U_\cN:{A\to BE}$ results in the following tripartite state when inputting $\rho_A$:
\[
\ket{\psi_{RBE}}=I_R\otimes U_\cN \ket{\psi^\rho_{RA}},\]
where $\tr_R\proj{\psi_{RA}^\rho}=\rho_A$. Then with a sufficient amount of pre-shared entanglement, the channel $\cN$ with input $\rho_A$ can be simulated with quantum communication rate $\frac{1}{2}I(R;B)_\psi$:
\begin{equation}\label{eq:QRST}
\frac{1}{2} I(R;B)_\psi [q\to q]+ \frac{1}{2} I(E;B)_\psi[qq]\geq \langle \cN:\rho_A\rangle. 
\end{equation} 

\end{theorem}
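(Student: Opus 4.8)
The plan is to derive this entanglement-assisted reverse Shannon statement from the FQSW resource inequality already recorded above, by running FQSW \emph{backwards}. Concretely, I would realize the channel simulation in two stages: first Alice applies the Stinespring dilation of $\cN$ locally, and then she redistributes the output $B$ to Bob by means of the \emph{state splitting} protocol, which is the time-reverse of FQSW.

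\emph{Stage 1 (local dilation).} Alice holds the channel input in the purified form $\ket{\psi^\rho_{RA}}$, with the reference $R$ inaccessible. She applies the isometry $U^\cN_{A\to BE}$ to her system $A$, producing the pure tripartite state $\ket{\psi_{RBE}}$ in which both the channel output $B$ and the environment $E$ sit in Alice's lab while $R$ stays at the reference. No communication is used here. \emph{Stage 2 (state splitting).} It remains to move $B$ to Bob, leaving $E$ with Alice. I would obtain this primitive by reversing FQSW. Because FQSW is a \emph{coherent} protocol --- its sender encoding and receiver decoding are isometries and the transmitted qubits are noiseless --- taking the adjoints of these isometries turns the inequality $\langle\psi_{A|B|R}\rangle + \frac{1}{2}I(A;R)_\psi[q\to q] \geq \frac{1}{2}I(A;B)_\psi[qq] + \langle\psi_{|AB|R}\rangle$ into the state-splitting inequality $\langle\psi_{|XY|R}\rangle + \frac{1}{2}I(X;R)_\psi[q\to q] + \frac{1}{2}I(X;Y)_\psi[qq] \geq \langle\psi_{X|Y|R}\rangle$: a party holding $XY$ can split off $X$ to a second party at the cost of $\frac{1}{2}I(X;R)$ qubits of communication and $\frac{1}{2}I(X;Y)$ consumed ebits. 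Applying this with $X=B$, $Y=E$ and reference $R$ sends $B$ from Alice to Bob using exactly $\frac{1}{2}I(R;B)_\psi[q\to q]$ and $\frac{1}{2}I(E;B)_\psi[qq]$, the budget in \eqref{eq:QRST}. After Stage 2 the global state is again $\ket{\psi_{RBE}}$ but with $B$ at Bob; tracing out $E$ reproduces the true channel output $\tr_E\proj{\psi_{RBE}}$ on $RB$, so the simulation is faithful and the relative resource $\langle\cN:\rho_A\rangle$ is produced.

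The crux is the reversal step, since resource inequalities are not reversible in general. One must (i) exhibit FQSW as a genuinely isometric/coherent protocol so that its inverse is itself a legitimate protocol; (ii) verify that reversing the isometries converts \emph{produced} ebits into \emph{consumed} ebits while preserving the quantum-communication cost $\frac{1}{2}I(X;R)$, whose direction merely flips and is absorbed by relabelling sender and receiver; and (iii) confirm that the asymptotic fidelity survives the reversal --- this holds because the adjoint of an encoding/decoding achieving trace-distance error $\epsilon$ again achieves $O(\epsilon)$ error, so the decoupling estimate underlying FQSW carries over verbatim to state splitting. The remaining bookkeeping --- that the retained environment $E$ and the shared ebits are tallied correctly and that the argument is applied to the i.i.d. state $\psi_{RBE}^{\otimes n}$ --- is routine once the reversal is in place.
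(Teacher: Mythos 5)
Your proposal is correct and coincides with the route the paper itself relies on: the paper never proves Theorem~\ref{thm_QRST} directly (the proof environment that follows it is actually the direct-coding argument for Theorem~\ref{thm_main}, which invokes Theorem~\ref{thm_QRST} as a black box), but instead imports it from \cite{Abeyesinghe:2009ej, Bennett:2014il} with the explicit remark that the FQSW protocol can be transformed into the entanglement-assisted QRST. Your derivation --- local Stinespring dilation followed by state splitting obtained as the time-reversal of the coherent FQSW protocol, with the three checks that FQSW is isometric, that reversal swaps produced and consumed resources while flipping the communication direction, and that the $O(\epsilon)$ accuracy survives taking adjoints --- is precisely that transformation (the fully quantum reverse Shannon protocol of \cite{Abeyesinghe:2009ej}), so it supplies exactly the details the paper delegates to its references.
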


\subsection{Scenario I: Classical Communication Channel between the Quantum Source and Receiver}

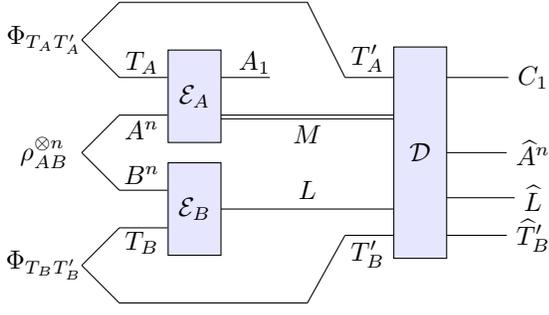
\begin{figure}
\centerline{
    \begin{tikzpicture}[scale=1][very thick]
    \fontsize{10pt}{1} 
    \tikzstyle{halfnode} = [draw,fill=white,shape= underline,minimum size=1.0em]
    \tikzstyle{checknode} = [draw,fill=blue!10,shape= rectangle,minimum height=3.5em, minimum width=2em]
    \tikzstyle{checknode2} = [draw,fill=blue!10,shape= rectangle,minimum height=8em, minimum width=2em]
    \tikzstyle{variablenode} = [draw,fill=white, shape=circle,minimum size=0.8em]
    \node (e1) at (-2,1) {$\rho_{AB}^{\otimes n}$} ;
    \node (EPR1) at (-2,2.5) {$\Phi_{T_AT_A'}$} ;
    \node (EPR2) at (-2,-0.5) {$\Phi_{T_BT_B'}$} ;
    \node (p3) at (4.5,1) {$\widehat{A}^n$} ;
    \node (p4) at (4.5,2.0) {$C_1$} ;
    \node (p5) at (4.5,0.4) {$\widehat{L}$} ;
    \node (p6) at (4.5,-0.1) {$\widehat{T}_{B}'$} ;
    \node (w2) at (1.5,1.25) {$M$} ;
    \node (w3) at (0.8,2.2) {$A_1$} ;
    \node (l2) at (1.5,0.5) {$L$} ;
    \node (s2) at (-0.7,1.3) {$A^n$} ;
    \node (e2) at (-0.7,0.7) {$B^n$} ;
    \node (s3) at (-0.7,2.2) {$T_A$} ;
    \node (e3) at (-0.7,-0.2) {$T_B$} ;
    \node (s4) at (2.3,2.25) {$T_A'$} ;
    \node (e4) at (2.3,-0.35) {$T_B'$} ;
     \draw  (-1.5,1)-- (-1,1.5) (-1.5,1)-- (-1,0.5) (-1,0.5) -- (0,0.5) (-1,1.5) -- (0,1.5); 
     \draw  (-1.5,-0.5)-- (-1,0) (-1.5,-0.5)-- (-1,-1.0) (-1,0) -- (0,0) (-1,-1) -- (1.5,-1) (1.5,-1)--(2,-0.1); 
     \draw  (-1.5,2.5)-- (-1,2) (-1.5,2.5)-- (-1,3.0) (-1,2.0) -- (0,2.0) (-1,3.0) -- (1.5,3.0) (1.5,3.0)-- (2, 2.0); 
     \draw (3.3,1) --++ (p3) (3.3,2.0) --++ (p4) (3.3,0.4) --++ (p5) (3.3,-0.1) --++ (p6); 
     \draw (0.3,0.25) --(3,0.25) (2,-0.1) --(3,-0.1) ; 
     \draw  (0.3,1.5)--(3,1.5) (0.3,1.45)--(3,1.45) (0.3, 2.0) -- (1,2.0) (2, 2.0) -- (3,2.0); 
    \node[checknode] (cn1) at (0,1.75) {${\cal E}_{A}$};
    \node[checknode2] (cn2) at (3,1) {${\cal D}$};
    \node[checknode] (cn3) at (0,0.25) {${\cal E}_{B}$};  
   \end{tikzpicture}
 }

  \caption{
Fully quantum source compression with a quantum helper when the quantum source and the receiver are connected by a classical channel.
  }\label{fig:QSCQH}
\end{figure}

As shown in Figure~\ref{fig:QSCQH}, the protocol for fully quantum source coding with a quantum helper involves two senders, Alice and Bob, and one receiver, Charlie. Initially Alice and Bob hold $n$ copies of a  bipartite quantum state $\rho_{AB}$, where Alice holds quantum systems $A^n:=A_1\cdots A_n$ while Bob (being a quantum helper) holds quantum systems ${B^n}=B_1\cdots B_n$. Moreover, there are pre-shared entangled states $\ket{\Phi_{T_AT_A'}} $ between Alice and Charlie, and pre-shared entangled states $\ket{\Phi_{T_BT_B'}} $ between Bob and Charlie. The goal is for the decoder Charlie to faithfully recover Alice's quantum state $\rho_{A^n}=\tr \rho_{AB}^{\otimes n}$ when assisted by the quantum helper Bob.

We now proceed to formally define the coding procedure. 
Let $\psi_{ABR}$ be a purification of $\rho_{AB}$.
We define an $(n,\epsilon)$ code for fully quantum source compression with a quantum helper to consist of the following:
\begin{itemize}
\item Alice's encoding operation $\cE_A: T_A A^n \to A_1 M$, where $A_1$ is a quantum system and $M$ is a classical system; Alice only sends $M$ to Charlie;
\item Bob's encoding operation $\cE_B: T_BB^n \to L$,  where $L$ is a quantum system of dimension $|L|=2^{nR_2}$; Bob sends the quantum system to Charlie;
\item  Charlie's decoding operation $\cD:M LT_{A}' T_B'\to \widehat{A}^n \widehat{L} \widehat{T}_B^\prime$ with isometric extension $U_\cD:M LT_{A}' T_B'\to C_1\widehat{A}^n \widehat{L} \widehat{T}_B^\prime$ that produces 
\[
\omega_{A_1C_1\widehat{A}^n \widehat{L}R^n \widehat{T}_B^\prime}=I_{A_1R^n}\otimes U_\cD(\sigma_{A_1MLR^nT_A'T_B'})
\]
where
\[
\sigma_{A_1MLR^nT_A'T_B'} = \cE_A\otimes I_{R^nLT_B'}(\theta_{A^nR^nLT_B'} \otimes \Phi_{T_AT_A'})
\]
and 
\[
\theta_{A^nR^nLT_B'} = I_{A^nR^nT_B'}\otimes {\cE_B} (\psi_{ABR}^{\otimes n} \otimes \Phi_{T_BT_B'}).
\]
\end{itemize}
We demand the final state to satisfy
\begin{align}\label{eq_cond1}
\| \omega_{A_1C_1\widehat{A}^n R^n\widehat{L} \widehat{T}_B^\prime} - \Phi_{A_1C_1}\otimes\theta_{{A}^n R^nL  T_B^\prime}\|_1\leq \epsilon,
\end{align}
where $\ket{\Phi_{A_1C_1}}$ is a maximally entangled state. This condition (\ref{eq_cond1}) guarantees that Alice's quantum system $A^n$ is sent to Charlie faithfully since $\omega_{\widehat{A}^n}\approx_\epsilon\theta_{A^n}=\rho_{A^n}$.  

Let $R_1=\log|T_A|-\log|A_1|$. 
A rate pair $(R_1,R_2)$ is said to be \emph{achievable} if for any $\epsilon,\delta>0$ and all sufficiently large $n$, there exists an $(n,\epsilon)$ code with rates $R_1+\delta$ and $R_2+\delta$. The rate region is then defined as the collection of all achievable rate pairs. Our main result is the following theorem. 

\begin{theorem}\label{thm_main}
Given is a bipartite quantum source $\rho_{AB}=\tr_R \psi_{ABR}$. The optimal rate region for lossless source coding of $A$ with a quantum helper $B$ is the set of rate pairs $(R_1,R_2)$ such that 
\begin{eqnarray}
R_1 &\geq & H(A|C)_\phi \label{thm6_1} \\
R_2 &\geq &\frac{1}{2} I(RA;C)_\phi. 
\end{eqnarray}
The state $\phi_{ACER}$ resulting from Bob's application of some CPTP map $\cE:{B\to C}$ with isometric extension   $U_\cE:{B\to CE}$ is 
\begin{equation}\label{eq_state01}
\ket{\phi_{ACER}}= I_{RA}\otimes U_\cE\ket{\psi_{ABR}}.
\end{equation}
\end{theorem}

\subsubsection{Direct part} 


We use the channel simulation method. 
For any local channel ${\cal E}:{B \to C}$ performed by the quantum helper $B$ on his half of bipartite state $\rho_{AB}$, it can be simulated by the decoder using the quantum reverse Shannon theorem (QRST) (Theorem~\ref{thm_QRST}):
\begin{align}
\frac{1}{2} I(RA; C)_\phi[q \to q] + \frac{1}{2} I(E; C)_\phi[qq] \ge \langle {\cal E}: \rho_B \rangle,
\end{align}
where $\ket{\phi_{ACER}}$ is given in Eq.~(\ref{eq_state01}).
In other words, by using the pre-shared entanglement between the helper and the decoder with rate $\frac{1}{2} I(E;C)_\phi$ and sending quantum message from the helper to the decoder with rate $\frac{1}{2} I(RA; C)_\phi$, the decoder can simulate the quantum state ${\cal E}(\rho_B)$ locally with error going to zero in the asymptotic sense.

Alice's coding: Once the decoder has the system $C$,  Alice and Charlie start the state merging protocol (\ref{eq_SM}), using the pre-shared entanglement with rate $H(A|C)_\phi$.

\subsubsection{Converse part}

We refer to Figure~\ref{fig:QSCQH} for corresponding labels used in the converse proof. Recall the condition (\ref{eq_cond1}) states that the output state  $\omega_{A_1C_1\widehat{A}^n R^n \widehat{L} \widehat{T}_B^\prime} \approx_\epsilon \Phi_{A_1C_1}\otimes\theta_{{A}^n R^n L  T_B^\prime}$, where
\[
\theta_{A^nR^nLT_B'} = I_{A^nR^nT_B'}\otimes {\cE_B} (\psi_{ABR}^{\otimes n} \otimes \Phi_{T_BT_B'}).
\]
We will omit the state $\theta$ in the subscript in the following sequences of equations to simplify the notation. 

To bound $n R_1=\log|T_A|-\log|A_1|$, we  follow steps in the converse proof of the state merging protocol \cite{Horodecki:2006hl} and have
\begin{align}
n R_1 &\geq H(A^n|LT_B')_\theta + f(\epsilon)\\
&= \sum_{i=1}^n H(A_i |LT_B' A_{<i})_\theta + f(\epsilon)\\
&= \sum_{i=1}^n H(A_i | U_i)_\theta + f(\epsilon) \label{eq_qccoded01}\\
&= n H(A_T | U_TT)+ f(\epsilon)  \\
&= n H(A | C)+ f(\epsilon),
\end{align}  
where $f(\epsilon)\to 0$ as $\epsilon\to 0$. We set $U_i := (L,T_B', A_{<i})$ in Eq.~(\ref{eq_qccoded01}). We relabel $A:=A_T$ and $C:=(U_T,T)$ in the last equality, so that we recover Eq.~(\ref{thm6_1}). 

To bound the quantum communication rate $R_2=\log|L|$, we follow steps in the converse proof of the entanglement-assisted quantum rate-distortion theorem (see Eq.~(21) in \cite{Wilde:2013hp}):
\begin{align}
2n R_2 &\ge I(LT_B' ; R^n A^n)_\theta \label{eq_ffb} \\
&= \sum_{i=1}^n  I(LT_B'; R_iA_i|R_{<i}A_{<i})_\theta \\
&= \sum_{i=1}^n [ I(LT_B'R_{<i}A_{<i}; R_iA_i)_\theta- I(R_{<i}A_{<i} ; R_iA_i)_\theta] \\
&= \sum_{i=1}^n I(LT_B'R_{<i}A_{<i}; R_iA_i)_\theta \\
&\ge \sum_{i=1}^n I(LT_B' A_{<i}; R_iA_i)_\theta \\
&= \sum_{i=1}^n I(U_i ; R_iA_i)_\theta \label{eq_ui} \\
&= n I(U_T ; R_T A_T|T) \\
&= n I(U_TT; R_TA_T) \\
&= n I(C; RA). \label{eq_ffe}
\end{align}
Note that $U_i$ can be generated from $B_i$ via Bob's local CPTP, and $T$ again is a time-sharing random variable that is uniformly distributed in the set $\{1,2,\cdots, n\}$.
In fact, Bob can first append the maximally entangled states $(T_B,T_B')$, systems $(A_{<i},B_{<i})$, and $B_{>i}$. Then, he can perform ${\cal E}_B$, and get $U_i = (L,T_B', A_{<i})$. With the relabelling, it is clear that Eq.~(\ref{eq_ffe}) is evaluated on a quantum state of the form $\ket{\phi_{ACER}}$ in Eq.~(\ref{eq_state01}).


\subsection{Scenario II: Quantum Communication Channel between the Quantum Source and Receiver}

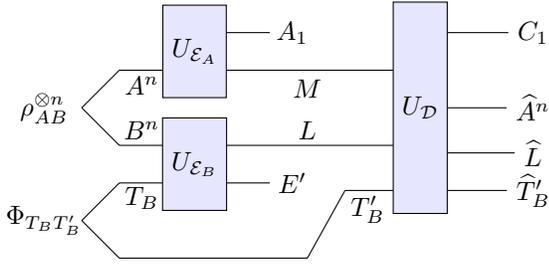
\begin{figure}
\centerline{
    \begin{tikzpicture}[scale=1][very thick]
    \fontsize{10pt}{1} 
    \tikzstyle{halfnode} = [draw,fill=white,shape= underline,minimum size=1.0em]
    \tikzstyle{checknode} = [draw,fill=blue!10,shape= rectangle,minimum height=3.5em, minimum width=2em]
    \tikzstyle{checknode2} = [draw,fill=blue!10,shape= rectangle,minimum height=8em, minimum width=2em]
    \tikzstyle{variablenode} = [draw,fill=white, shape=circle,minimum size=0.8em]
    \node (e1) at (-2,1) {$\rho_{AB}^{\otimes n}$} ;
    \node (EPR2) at (-2,-0.5) {$\Phi_{T_BT_B'}$} ;
    \node (p3) at (4.5,1) {$\widehat{A}^n$} ;
    \node (p4) at (4.5,2.0) {$C_1$} ;
    \node (p5) at (4.5,0.4) {$\widehat{L}$} ;
    \node (p6) at (4.5,-0.1) {$\widehat{T}_{B}'$} ;
    \node (w2) at (1.5,1.25) {$M$} ;
    \node (w3) at (1.3,2.0) {$A_1$} ;
    \node (w4) at (1.3,0) {$E'$} ;    
    \draw (0.3,0)--(1,0); 
    \node (l2) at (1.5,0.7) {$L$} ;
    \draw (0.3,0.5) --(3,0.5); 
    \node (s2) at (-0.7,1.3) {$A^n$} ;
    \node (e2) at (-0.7,0.7) {$B^n$} ;
   \node (e3) at (-0.7,-0.2) {$T_B$} ;
    \node (e4) at (2.3,-0.35) {$T_B'$} ;
      \draw  (-1.5,1)-- (-1,1.5) (-1.5,1)-- (-1,0.5) (-1,0.5) -- (0,0.5) (-1,1.5) -- (0,1.5); 
     \draw  (-1.5,-0.5)-- (-1,0) (-1.5,-0.5)-- (-1,-1.0) (-1,0) -- (0,0) (-1,-1) -- (1.5,-1) (1.5,-1)--(2,-0.1); 
     \draw (3.3,1) --++ (p3) (3.3,2.0) --++ (p4) (3.3,0.4) --++ (p5) (3.3,-0.1) --++ (p6); 
     \draw (2,-0.1) --(3,-0.1) ; 
     \draw  (0.3,1.5)--(3,1.5)  (0.3, 2.0) -- (1,2.0); 
    \node[checknode] (cn1) at (0,1.75) {$U_{{\cal E}_{A}}$};
    \node[checknode2] (cn2) at (3,1) {$U_{\cal D}$};
    \node[checknode] (cn3) at (0,0.25) {$U_{{\cal E}_{B}}$};  
   \end{tikzpicture}
 }

  \caption{
Fully quantum source compression with a quantum helper when the quantum source and the receiver are connected by a quantum channel.
  }\label{fig:QSCQH2}
\end{figure}

In this scenario, we replace the classical channel between Alice and Charlie with a quantum channel, as shown in Figure~\ref{fig:QSCQH2}. We define an $(n,\epsilon)$ code for fully quantum source compression with a quantum helper to consist of the following:
\begin{itemize}
\item Alice's encoding operation $\cE_A:A^n\to M$ with isometric extension  $U_{\cE_A}: A^n \to A_1 M$, where the quantum system $M$ to be sent is of size $|M|=2^{nR_1}$;
\item Bob's encoding operation $\cE_B:T_BB^n\to L$ with $U_{\cE_B}: T_BB^n \to L E'$, where the quantum system $L$ to be sent is of size $|L|=2^{nR_2}$;
\item  Charlie's decoding operation $\cD:M L  T_B'\to \widehat{A}^n \widehat{L} \widehat{T}_B^\prime$ with $U_{\cD}:M L  T_B'\to C_1\widehat{A}^n \widehat{L} \widehat{T}_B^\prime$ that produces 
\[
\ket{\omega_{A_1C_1E'\widehat{A}^n R^n\widehat{L}\widehat{T}_B^\prime}}=(I_{A_1R^nE'}\otimes U_\cD)\ket{\sigma_{A_1MR^nLE'T_B'}}
\]
where
\[
\ket{\sigma_{A_1MR^nLE'T_B'}} = (U_{\cE_A}\otimes I_{R^nLE'T_B'} )\ket{\theta_{A^nR^nLE'T_B'}}
\]
and 
\[
\ket{\theta_{A^nR^nLE'T_B'}} = I_{A^nR^nT_B'}\otimes U_{\cE_B} (\ket{\psi_{ABR}}^{\otimes n} \otimes \ket{\Phi_{T_BT_B'}}).
\]
\end{itemize}
We require that the final state satisfies
\begin{align}\label{eq_cond21}
\| \omega_{A_1C_1\widehat{A}^n R^n\widehat{L}\widehat{T}_B^\prime}- \Phi_{A_1C_1}\otimes\theta_{{A}^n R^nL T_B^\prime }\|_1\leq \epsilon,
\end{align}
where $\ket{\Phi_{A_1C_1}}$ is a maximally entangled state. The condition (\ref{eq_cond21}) guarantees that the output state $\omega_{\widehat{A}^nR^n\widehat{L}\widehat{T}_B^\prime}$ is close to the state $\theta_{A^nR^nLT_B'}$. Consequently, $\omega_{\widehat{A}^n}\approx_{\epsilon} \theta_{A^n}=\rho_{A^n}$.

A rate pair $(R_1,R_2)$ is said to be \emph{achievable} if for any $\epsilon,\delta>0$ and all sufficiently large $n$, there exists an $(n,\epsilon)$ code with the rates $R_1+\delta$ and $R_2+\delta$. The rate region is then defined as the collection of all achievable rate pairs. Our main result is the following theorem. 

\begin{theorem}\label{thm_main2}
Given a bipartite quantum source $\rho_{AB}=\tr_R \psi_{ABR}$, the optimal rate region for lossless source coding of $A$ with a quantum helper $B$ is the set of rate pairs $(R_1,R_2)$ such that 
\begin{eqnarray}
R_1 &\geq & \frac{1}{2} I(A;ER)_\phi \\
R_2 &\geq &\frac{1}{2} I(RA;C)_\phi. 
\end{eqnarray}
The state $\phi_{ACER}$ resulting from Bob's application of some CPTP map $\cE:{B\to C}$ with isometric extension $U_{\cE}:B\to CE$ is 
\begin{equation}\label{eq_state02}
\ket{\phi_{ACER}}= I_{RA}\otimes U_\cE \ket{\psi_{ABR}}.
\end{equation}
\end{theorem}

\subsubsection{Direct part}

Again, we use the channel simulation method. 
Any local channel ${\cal E}:{B \to C}$ performed by the quantum helper $B$ on his half of bipartite state $\rho_{AB}$ can be simulated by the decoder using the quantum reverse Shannon theorem (QRST) (Theorem~\ref{thm_QRST}):
\begin{align}
\frac{1}{2} I(RA; C)_\phi[q \to q] + \frac{1}{2} I(E; C)_\phi[qq] \ge \langle {\cal E}: \rho_B \rangle,
\end{align}
where $\phi_{ACER}$ is given in Eq.~(\ref{eq_state02}).
In other words, by using the pre-shared entanglement between the helper and the decoder with rate $\frac{1}{2} I(E;C)_\phi$ and sending quantum message from the helper to the decoder with rate $\frac{1}{2} I(RA; C)_\phi$, the decoder can simulate the quantum state ${\cal E}(\rho_B)$ locally with error goes to zero in the asymptotic sense.

Alice's coding: Once the decoder has the quantum system $C$, Alice starts the FQSW protocol (\ref{eq_FQSW}) with Charlie to merge her quantum system $A$ to $C$. The needed quantum communication  rate is $\frac{1}{2} I(A;ER)_\phi$.

\subsubsection{Converse part}

We refer to Figure~\ref{fig:QSCQH2} for corresponding labels used in the converse proof. 
Denote the states after Bob's encoding, Alice's and Bob's encodings, and all three operations as
\begin{eqnarray}
\theta_{A^nLE'T_B'} &=& I_{A^nT_B'}\otimes U_{\cE_B} (\rho_{AB}^{\otimes n} \otimes \Phi_{T_BT_B'})\\
\sigma_{A_1MLE'T_B'} &=& U_{\cE_A}\otimes I_{LE'T_B'} (\theta_{A^nLE'T_B'}) \\
\omega_{A_1C_1\widehat{A}^n\widehat{L}\widehat{T}_B' }&=& I_{A_1E'}\otimes U_\cD (\sigma_{A_1MLT_B'}),
\end{eqnarray}
and note that the reference system $R^n$ that purifies $\rho_{AB}^{\otimes n}$ also purifies $\theta_{A^nLE'T_B'} $, $\sigma_{A_1MLE'T_B'}$ and $\omega_{A_1C_1E'\widehat{A}^n\widehat{L}\widehat{T}_B' }$. Also recall that the state $\omega_{A_1C_1\widehat{A}^nR^n\widehat{L}\widehat{T}_B' }$ satisfies the condition (\ref{eq_cond21}); hence $\omega_{A_1C_1\widehat{A}^nR^n\widehat{L}\widehat{T}_B' }\approx_{\epsilon} \Phi_{A_1C_1}\otimes\theta_{{A}^n R^nL T_B^\prime }$.

Clearly, the lower bound for rate $R_2$ follows exactly from Eqs.~(\ref{eq_ffb})-(\ref{eq_ffe}):
\[
R_2\geq \frac{1}{2} I(C;RA).
\] 
To show the lower bound for $R_1$, we begin with
\begin{align}
H(LT_B')_\sigma &+H(M)_\sigma \notag \\
&\geq  H(LT_B'M)_\sigma \label{eq_85} \\
&= H(C_1\widehat{A}^n \widehat{L}\widehat{T}_B' )_\omega \\
&= H(A_1 R^n E' )_\omega \\
&\geq H(A_1)_\tau +H(R^nE')_\theta\label{eq_unc} +f(\epsilon)\\
&\geq H(A^n)_\theta-H(M)_\sigma +H(R^nE')_\theta +f(\epsilon).  \label{eq_lb4}
\end{align} 
The first inequality follows from the subadditivity of the von Neumann entropy. The second inequality (\ref{eq_unc}) follows from Eq.~(\ref{eq_cond21}), the Fannes inequality, and note that $\tau_{A_1}=\tr_{C_1}\Phi_{A_1C_1}$ and $f(\epsilon)\to 0$ as $\epsilon \to0$. The final inequality again follows from the subadditivity of the von Neumann entropy:
\[
H(A^n)_\theta = H(A_1M)_\sigma\leq H(A_1)_\sigma+H(M)_\sigma
\]
and $H(A_1)_\tau\geq H(A_1)_\sigma$ since $\tau_{A_1}$ is the completely mixed state. Then
\begin{align}
R_1&\geq H(M)_\sigma \\
&\geq \frac{1}{2}[H(A^n)_\theta + H(R^nE')_\theta - H(LT_B')_\sigma] \label{eq_lb3} \\
&= \frac{1}{2}[H(A^n)_\theta + H(R^nE')_\theta - H(A^nR^nE')_\theta] \\
&= \frac{1}{2} I(A^n; R^nE')_\theta, \label{eq_94}
\end{align}
where the second line follows from Eq.~(\ref{eq_lb4}), and the third line follows from $H(LT_B')_\sigma= H({L}{T}_B')_\theta= H(A^n R^nE')_\theta$.
These lines (\ref{eq_85})-(\ref{eq_94}) closely follow steps  in the converse proof of the FQSW \cite{Abeyesinghe:2009ej}.

Finally, continuing from Eq.~(\ref{eq_94}) gives
\begin{align}
R_1&\geq \frac{1}{2} I(A^n; R^nE') \\
& = \frac{1}{2} \sum_{i=1}^n I(A_i ; R^n E^\prime | A_{<i}) \\
&= \frac{1}{2}\left[ \sum_{i=1}^n I(A_i ; R^n E^\prime A_{<i}) - I(A_i ; A_{<i}) \right]\\
&=  \frac{1}{2} \sum_{i=1}^n I(A_i ; R^n E^\prime A_{<i}) \\
& \ge \frac{1}{2}\sum_{i=1}^n I(A_i ; R_{\leq i} E^\prime)\\
& =  \frac{1}{2}\sum_{i=1}^n I(A_i ; R_i E_i) \label{eq_lb2} \\
& = \frac{1}{2} n I(A_T ; R_T E_T |T)\label{eq_101} \\
& = \frac{1}{2} n I(A_T ; T R_T E_T) \\
&= \frac{1}{2} n I(A ; R E). \label{eq_lb1}
\end{align}  
To ease the notation, we omit the subscript state $\theta$ in the sequences of equations.  We denote $E_i := (R_{<i},E^\prime)$ in Eq.~(\ref{eq_lb2}), introduce the auxiliary random variable $T$ that is uniformly distributed in the set $\{1,2,\cdots, n\}$ in Eq.~(\ref{eq_101}), and denote $E := (T, E_T)$, $A:=A_T$ and $R:=R_T$ in Eq.~(\ref{eq_lb1}). Note that $E_i$ in Eq.~(\ref{eq_lb2}) is the environment which purifies $U_i$ in Eq.~(\ref{eq_ui}).

\section{Conclusion and Discussion}\label{secIII}

We first considered the problem of compression of a classical source with a quantum helper. We completely characterized its rate region and showed that the capacity formula does not require regularization.  While the expressions for the rate region are similar to the classical result in \cite{Wyner:1975iv, Ahlswede:1975ea, ElGamal:2011ty}, we have employed a different proof technique. To prove the achievability, we employed a powerful theorem, the measurement compression theorem \cite{Winter:2004uk}, that can decompose quantum measurement. A similar approach was recently applied to derive a non-asymptotic bound on  the classical helper problem \cite{Watanabe:2013ea}.

The rate region in our Theorem~\ref{thm_main} bears a close resemblance to its classical counterpart. Our result also shows that a helper's strategy of simply compressing the side information $H(C)_\phi$ and sending it to the decoder is sub-optimal with entanglement assistance. Recall the following identity:
\[
H(C)_\phi = \frac{1}{2} I(C;E)_\phi + \frac{1}{2} I(C;RA)_\phi,
\]
where the state $\ket{\phi_{ACER}}$ is given in Eq.~(\ref{eq_state01}).
The QRST protocol allows us to divide the amount of quantum communication required for lossless transmission of system $C$ to the decoder into pre-shared entanglement with rate $\frac{1}{2}I(C;E)_\phi$ and quantum communication with rate $ \frac{1}{2} I(C;RA)_\phi$.

We would like to point out that the definition of fully quantum source compression with a quantum helper requires explicit inclusion of additional quantum systems $L T_{B}'$ (see Eqs.~(\ref{eq_cond1}) and (\ref{eq_cond21})) for technical purposes. The reason behind this is that when the quantum state merging is performed, the target systems into which the quantum state is merged need to be specified. We believe that the inclusion of these additional systems in the definition is inevitable, and signals a fundamental difference between the fully quantum source compression with a quantum helper and its classical counterpart.  


One interesting direction to extend our results is a problem involving multiple senders and/or multiple receivers (Figure~\ref{fig:GMNSC}). Such a problem has been extensively studied in classical information theory; see \cite{Csiszar:1980ba, TeSunHan:1980ci} and references therein. When there are more than two helpers, such a problem is called two-helper problem, and has not been solved even in the classical case. Thus, it is expected that a quantum version of such a problem is also a difficult problem. When there are multiple receivers, the message sent by one sender may be sent to more than one receiver in the classical setting. However, since quantum states cannot be copied, quantum extension of multiple receiver problems must be  defined carefully. 

Finally, in classical source coding with a helper problem, it is possible to bound the dimension of the helper's output system $|\cU|\leq |\cY|+1$, where $|\cY|$ is the alphabetical size of the random variable $Y$. Such a dimension bound is a simple application of Carath\'eodory theorem.  However, bounding the dimension of an auxiliary
quantum system, i.e. the quantum helper, turns out to be very non-trivial, and very little is known  (cf.~\cite{BeiGoh14}). 



\section*{Acknowledgements}
MH is supported by an ARC Future Fellowship under Grant FT140100574. 
SW was supported in part by JSPS Postdoctoral Fellowships for Research Abroad.

\begin{IEEEbiographynophoto}{Min-Hsiu Hsieh}(M'09--SM'16) received his PhD degree in electrical engineering from the University of Southern California, Los Angeles, in 2008.  From 2008-2010, he was a Researcher at the ERATO-SORST Quantum Computation and Information Project, Japan Science and Technology Agency, Tokyo, Japan. From 2010-2012, he was a Postdoctoral Researcher at the Statistical Laboratory, the Centre for Mathematical Sciences, the University of Cambridge, UK. He is now an Australian Research Council (ARC) Future Fellow and an Associate Professor at the Centre for Quantum Computation \& Intelligent Systems (QCIS), Faculty of Engineering and Information Technology (FEIT), University of Technology Sydney (UTS). His scientific interests include quantum Shannon theory, entanglement theory, and quantum coding theory.
\end{IEEEbiographynophoto}

\begin{IEEEbiographynophoto}
{Shun Watanabe}
(M'09) received the B.E., 
M.E., and Ph.D.\ degrees from the Tokyo Institute of Technology
in 2005, 2007, and 2009, respectively. During April 2009 to February 2015, he was an
Assistant Professor in the Department of Information 
Science and Intelligent Systems at  the University of Tokushima.
During April 2013 to March 2015, he was a visiting Assistant Professor
in the Institute for Systems Research at the University of Maryland, College Park.
Since February 2015, he has been an Associate Professor in the Department of
Computer and Information Sciences at Tokyo University of Agriculture and Technology.
His current research interests are in the areas of
information theory, quantum information theory,
cryptography, and computer science.
\end{IEEEbiographynophoto}


\begin{thebibliography}{99}


\bibitem{Shannon:1948wk} C. E. Shannon, ``A mathematical theory of communication,'' Bell Syst. Tech. J., vol. 27, pp. 379--423, 623--656, 1948.

\bibitem{Slepian:1973wj} D. Slepian and J. K. Wolf, ``Noiseless coding of correlated information sources,'' IEEE Trans. Inform. Theory, vol. 19, no. 4, pp. 471--480, 1973.

\bibitem{Wyner:1975iv} A. D. Wyner, ``On source coding with side information at the decoder,'' IEEE Trans. Inform. Theory, vol. 21, no. 3, pp. 294--300, 1975.

\bibitem{Ahlswede:1975ea} R. Ahlswede and J. K\"orner, ``Source coding with side information and a converse for the degraded broadcast channel," IEEE Trans. Inform. Theory, vol. 21, no. 6, pp. 629--637, 1975.

\bibitem{Korner:1977dc} J. Korner and K. Marton, ``Images of a set via two channels and their role in multi-user communication,'' IEEE Transactions on Information Theory, vol. 23, no. 6, pp. 751--761, 1977.

\bibitem{Csiszar:1980ba} I. Csiszar and J. Korner, ``Towards a general theory of source networks,'' IEEE Transactions on Information Theory, vol. 26, no. 2, pp. 155--165, 1980. 

\bibitem{TeSunHan:1980ci} T. S. Han and K. Kobayashi, ``A unified achievable rate region for a general class of multiterminal source coding systems,'' IEEE Trans. Information Theory, vol. 26, no. 3, pp. 277--288, 1980. 

\bibitem{Wyner:1975iv} A. Wyner, ``On source coding with side information at the decoder,'' IEEE Trans. Information Theory, vol. 21, no. 3, pp. 294--300, 1975.
 
\bibitem{Sgarro:1977it} A. Sgarro, ``Source coding with side information at several decoders,''  IEEE Trans. Information Theory, vol. 23, no. 2, pp. 179--182, 1977.

\bibitem{Schumacher:1995dg} B. Schumacher, ``Quantum coding,'' Phys. Rev. A, vol. 51, no. 4, pp. 2738--2747, Apr. 1995.

\bibitem{Jozsa:1994ea} R. Jozsa and B. Schumacher, ``A New Proof of the Quantum Noiseless Coding Theorem,'' J. of Modern Optics, vol. 41, no. 12, pp. 2343--2349, Dec. 1994.

\bibitem{Holevo:1973wo}  A. S. Holevo, ``Statistical decision theory for quantum systems,'' Journal of Multivariate Analysis, vol. 3, no. 4, pp. 337--394, 1973.


\bibitem{Horodecki:2005fv} M. Horodecki, J. Oppenheim, and A. Winter, ``Partial quantum information,'' Nature, vol. 436, no. 7051, pp. 673--676, Aug. 2005.

\bibitem{Horodecki:2006hl} M. Horodecki, J. Oppenheim, and A. Winter, ``Quantum State Merging and Negative Information,'' Communications in Mathematical Physics, vol. 269, no. 1, pp. 107--136, Oct. 2006.

\bibitem{Dupuis:2014jz} F. Dupuis, M. Berta, J. Wullschleger, and R. Renner, ``One-Shot Decoupling,'' Communications in Mathematical Physics, vol. 328, no. 1, pp. 251--284, Mar. 2014.

\bibitem{Abeyesinghe:2009ej} A. Abeyesinghe, I. Devetak, P. M. Hayden, and A. Winter, ``The mother of all protocols: restructuring quantum information's family tree,'' Proceedings of the Royal Society A: Mathematical, Physical and Engineering Sciences, vol. 465, no. 2108, pp. 2537--2563, Jun. 2009.

\bibitem{Devetak:2008eb} I. Devetak, A. W. Harrow, and A. Winter. A Resource Framework for Quantum Shannon Theory. IEEE Trans. Inform. Theory, vol. 54, no. 10, pp. 4587--4618, Oct. 2008.

\bibitem{Hsieh:2010gd} M.-H. Hsieh  and M.M. Wilde, ``Trading Classical Communication, Quantum Communication, and Entanglement in Quantum Shannon Theory,'' IEEE Transactions on Information Theory, vol. 56, no. 9, pp. 4705--4730, 2010. 

\bibitem{Datta:2011vc} N. Datta and M.-H. Hsieh, ``The apex of the family tree of protocols: optimal rates and resource inequalities,'' New J. Phys., vol. 13, no. 9, p. 093042, 2011.

\bibitem{Devetak:2003kd} I. Devetak and A. Winter, ``Classical data compression with quantum side information,'' Phys. Rev. A, vol. 68, no. 4, Oct. 2003.

\bibitem{Shannon:1959tf} C. E. Shannon, ``Coding theorems for a discrete source with a fidelity criterion,'' IRE Nat. Conv. Rec, vol. 4, pp.~142--163, 1959.

\bibitem{Berger71} T. Berger, \emph{Rate Distortion Theory: A Mathematical Basis for Data Compression}. Englewood Cliffs, NJ: Prentice Hall, 1971.

\bibitem{Devetak:2002it} I. Devetak and T. Berger, ``Quantum rate-distortion theory for memoryless sources,'' IEEE Trans. Inform. Theory, vol. 48, no. 6, pp. 1580--1589, 2002.

\bibitem{Datta:2013ur} N. Datta, M.-H. Hsieh, and M. M. Wilde, ``Quantum Rate Distortion, Reverse ShannonTheorems, and Source-Channel Separation,'' IEEE Trans. Inform. Theory, vol. 59, no. 1, pp. 615--629, 2013.

\bibitem{Wilde:2013hp} M. M. Wilde, N. Datta, M.-H. Hsieh, and A. Winter, ``Quantum Rate-Distortion Coding With Auxiliary Resources,'' IEEE Trans. Inform. Theory, vol. 59, no. 10, pp. 6755--6773, 2013.

\bibitem {Datta:2013jk} N. Datta, M.-H. Hsieh, M. M. Wilde, and A. Winter, ``Quantum-to-classical rate distortion coding,'' J. Math. Phys., vol. 54, no. 4, p. 042201, 2013.

\bibitem{Cover} T. M. Cover and J. A. Thomas, \emph{Elements of Information Theory }. Wiley, New York, 1991.

\bibitem{ElGamal:2011ty} A. El Gamal and Y.-H. Kim, Network information theory. Cambridge University Press, 2011.

\bibitem{Winter:2004uk} A. Winter, ``Extrinsic'' and ``Intrinsic'' Data in Quantum Measurements: Asymptotic Convex Decomposition of Positive Operator Valued Measures. Communications in Mathematical Physics, 244(1), 157--185, 2004.

\bibitem{Wilde:2012iq} M. M. Wilde, P. M. Hayden, F. Buscemi, and M.-H. Hsieh, ``The information-theoretic costs of simulating quantum measurements,'' Journal of Physics A: Mathematical and Theoretical, vol. 45, no. 45, pp. 453001, Nov. 2012.

\bibitem{Watanabe:2013ea} S. Watanabe, S. Kuzuoka, and V. Y. F. Tan, ``Non-Asymptotic and Second-Order Achievability Bounds for Source Coding With Side-Information," in Proc. 2013 IEEE International Symposium on Information Theory, pp. 3055--3059.

\bibitem{Devetak:2005ea} I. Devetak and A. Winter, ``Distillation of secret key and entanglement from quantum states," in Proc. of The Royal Society A, vol. 461, pp. 207--235, Jan. 2005.

\bibitem{Bennett:2002fw} Bennett, Charles, Peter W Shor, John A Smolin, and A V Thapliyal. ``Entanglement-Assisted Capacity of a Quantum Channel and the Reverse Shannon Theorem.'' IEEE Transactions on Information Theory 48, no. 10 (October 2002): 2637--2655. doi:10.1109/TIT.2002.802612.

\bibitem{Bennett:2014il} C. H. Bennett, I. Devetak, A. W. Harrow, P. W. Shor, and A. Winter. The Quantum Reverse Shannon Theorem and Resource Tradeoffs for Simulating Quantum Channels. IEEE Trans. Inform. Theory, vol. 60, no. 5, pp. 2926--2959, 2014.

\bibitem{HW15} M.-H. Hsieh and S. Watanabe. Source Compression with a Quantum Helper. arXiv:1501.04366, 2015 (accepted for publication in ISIT 2015).  

\bibitem{BeiGoh14} S. Beigi and A. Gohari, ``On Dimension Bounds for Auxiliary Quantum Systems,'' IEEE Trans. Inform. Theory, vol. 60, no. 1, pp. 368--387, 2014.

%
\end{thebibliography}
\end{document}